\tikzset{every state/.style={minimum size=0pt}}
\newcommand{\cupdot}{\mathbin{\mathaccent\cdot\cup}}
\newcommand{\op}[1]{\mathsf{#1}}
\newcommand{\Nat}{\mathbb{N}}
\newcommand{\NFA}{\op{NFA}}
\newcommand{\NBA}{\op{NBA}}
\newcommand{\UBA}{\op{UBA}}
\newcommand{\IDA}{\op{IDA}}
\newcommand{\EDA}{\op{EDA}}
\newcommand{\mc}[1]{\mathcal{#1}}
\newcommand{\Runs}{\op{Runs}}
\newcommand{\Pref}{\op{Pref}}
\newcommand{\Suf}{\op{Suf}}
\newcommand{\Inf}{\op{Inf}}
\newcommand{\da}{\op{da}}
\newcommand{\dpa}{\op{dpa}}
\begin{document}
\title{On finitely ambiguous Büchi automata \thanks{The final authenticated publication is
available online at \texttt{https://doi.org/10.1007/978-3-319-98654-8\_41}}}
%

\author{Christof Löding \and Anton Pirogov 
\thanks{This work is supported by the German research council (DFG) Research Training Group 2236 UnRAVeL}}

\institute{RWTH Aachen University, Templergraben 55, 52062 Aachen, Germany \\
\email{\{loeding,pirogov\}@cs.rwth-aachen.de} }
\maketitle              
\begin{abstract} Unambiguous Büchi automata, i.e. Büchi automata allowing only one
  accepting run per word, are a useful restriction of Büchi automata that is well-suited
  for probabilistic model-checking. In this paper we propose a more permissive variant,
  namely \emph{finitely ambiguous Büchi automata}, a generalisation where each word has at
  most $k$ accepting runs, for some fixed $k$. We adapt existing notions and results
  concerning finite and bounded ambiguity of finite automata to the setting of
  $\omega$-languages and present a translation from arbitrary nondeterministic Büchi
  automata with $n$ states to finitely ambiguous automata with at most $3^n$ states and at
  most $n$ accepting runs per word.

  \keywords{Büchi automata \and infinite words \and ambiguity}
\end{abstract}

\section{Introduction}

Nondeterministic Büchi automata ($\NBA$) \cite{buchi1966symposium} are
finite automata for infinite words that have applications in logical
decision procedures, in particular in the field of model checking
\cite{baier2008principles}, as they can succinctly represent many
interesting properties of non-terminating systems with infinite
execution traces. In some contexts, unrestricted nondeterminism is
problematic, e.g. in probabilistic model checking reasoning about
probabilities becomes very difficult under nondeterminism and
therefore other models are necessary.

One solution is determinisation of Büchi automata. As deterministic Büchi automata are
strictly weaker than $\NBA$, this requires a quite complex translation to automata with
different acceptance conditions and incurs a state blow-up of order $2^{n\log n}$ in the
worst-case \cite{safra1988complexity,thomas1997languages}. But determinisation can be
avoided because some restricted forms of nondeterminism are also suitable for probabilistic model checking, e.g.
\emph{limit-deterministic Büchi automata} \cite{courcoubetis1995complexity}, which can be
separated into a subset of states that can never accept, but have nondeterministic
transitions, and a deterministic subset that contains all of the accepting states, but cannot reach
the nondeterministic states again. Such automata can be used with similar algorithms as
deterministic automata and are also suitable for the model checking of Markov decision
processes \cite{couvreur2003optimal}.

Another well-studied variant are \emph{unambiguous Büchi automata}
($\UBA$), i.e. automata admitting at most one accepting run for each
word, which are known to be as powerful as unrestricted Büchi automata
\cite{arnold1983rational}, while they can be exponentially smaller
than equivalent deterministic automata \cite{baier2016markov}. On finite words, unambiguous
automata form an interesting subclass of nondeterministic automata
because they admit a polynomial time inclusion test
\cite{stearns1985equivalence} (while this problem is
$\op{PSPACE}$-complete for general nondeterministic automata). This
result can be extended to finitely ambiguous automata, which have at
most $k$ accepting runs for each input for some fixed number $k$
\cite{stearns1985equivalence}.  It is unknown whether the polynomial
time inclusion test can be extended to $\UBA$. However, some positive
results have been obtained for simpler types of acceptance conditions
\cite{IsaakL12} and a stronger notion of ambiguity
\cite{bousquet2010equivalence}. Furthermore, $\UBA$ admit a polynomial
time algorithm for quantitative probabilistic model checking based on
linear equation systems \cite{baier2016markov}.

While standard translations from the temporal logic $\op{LTL}$ to $\NBA$
yield unambiguous automata, for the transformation of a given $\NBA$ into
an $\UBA$ only non-trivial constructions
\cite{kahler2008complementation,karmarkar2013improved} roughly of
order $n^n$ for an $n$ state $\NBA$ are known.

In this article, we study finitely ambiguous $\NBA$. To the best of
our knowledge, this model has not been considered before. We show that
there is a simple construction for transforming any given $\NBA$ with
$n$ states into a finitely ambiguous $\NBA$ with at most $3^n$
states. We also present an exponential lower bound of order $2^n$ for such a
construction, which is easily obtained from a corresponding lower
bound for finitely ambiguous automata on finite words
\cite{leung1998separating}. Furthermore, we study the possible degrees
of ambiguity for $\NBA$.  We present a classification of the degree of ambiguity of Büchi automata and the complexity of the corresponding decision
problems, based on results for finite words in
\cite{weber1991degree,allauzen2008general,chan1988finite}. While many results can be
 generalized from finite to infinite words in a straight-forward way, there are different types of infinite degree of ambiguity for $\NBA$. We characterize those in terms of state patterns similar to those that are used over finite words to distinguish between polynomial and exponential growth rates of ambiguity.

This paper is organized as follows. After introducing basic notation
in \Cref{sec:prelim} we present the classification degrees of
ambiguity of nondeterministic Büchi automata and the complexity of the
corresponding decision problems in \Cref{sec:ambiguity}. In
\Cref{sec:translation} we present the translation from an arbitrary
$\NBA$ to a finitely ambiguous $\NBA$, and state a lower bound for such a transformation.
In \Cref{sec:conclusion} we conclude. Full versions of proofs sketched in the main
text can be found in the appendix.
\vspace{-4mm}

\section{Preliminaries}
\vspace{-3mm}
\label{sec:prelim}
For a finite alphabet $\Sigma$, $\Sigma^*$ denotes the set of all finite and
$\Sigma^\omega$ the set of all infinite words over $\Sigma$.
For $a_i\in \Sigma$ and a (finite or infinite) word $w=a_1a_2\dots$ let $w(i):=a_i$. We denote the prefix
of length $i$ with $\Pref_i(w) := a_1\dots a_i$.
The suffix starting at position $i$ is denoted by $\Suf_i(w) := a_i
a_{i+1}\dots$.
Let $\Inf(w) = \{x \mid w(i) = x$ for infinitely many $i\}$ denote the
\emph{infinity set} of a word $w$.

Let a tuple $\mc{A}=(Q,\Sigma,\Delta,Q_0,F)$ denote a \emph{finite automaton} with some
finite alphabet $\Sigma$, finite set of states $Q$, transition relation $\Delta\subseteq
Q\times\Sigma\times Q$ and initial and final states $Q_0,F\subseteq Q$.
Let $|\mc{A}| := |\Delta|$ denote the size of $\mc{A}$.
We write
$\Delta(P,x)$ for $\{q \mid p\in P, (p,x,q)\in\Delta \}$, $\Delta_S(P,x) = \Delta(P,x) \cap S$
and $\Delta_{\overline{S}}(P,x) = \Delta(P,x)\setminus S$ for some $S \subseteq Q$.
For convenience, we write $\Delta(p,x)$ when we mean $\Delta(\{p\},x)$.
With $\mc{A}[S]$ we denote the modified automaton with $Q_0=S$, $\mc{A}[\{s\}]$ can be written as
$\mc{A}[s]$.

A transition sequence $\pi=(q_1,a_1,r_1)\ldots(q_n,a_n,r_n)$ is called a \emph{path} if
$q_i$ equals $r_{i-1}$ for all $1 < i \le n$.
The source and target of the path are denoted by
$\op{src}(\pi)=q_1$ and $\op{trg}(\pi)=r_n$ while $\op{lbl}(\pi)=a_1\dots a_n$ and
$\op{st}(\pi)=q_1\dots q_n r_n$ denote the label and state sequence of $\pi$, respectively.
For convenience, let $\pi(i) := \op{st}(\pi)(i)$. We also consider infinite paths (with
the obvious definition). In general, when speaking of a path or a sequence, we refer to a
finite or infinite path or sequence, depending on the context.

The set of all paths from states in $Q$ to states in $R$ labelled with $x$ is denoted by
$P(Q,x,R) := \{\pi \mid \op{src}(\pi)\in Q, \op{lbl}(\pi)=x, \op{trg}(\pi)\in R \}$, while
$P^\omega(Q,x):=\{\pi \mid \op{src}(\pi)\in Q, \op{lbl}(\pi)=x, x\in\Sigma^\omega\}$ denotes all
infinite paths with label $x$ starting in a state from $Q$. Paths compose in the expected
way. We write $p \overset{x}{\rightarrow} q$ if $P(p,x,q)\neq\emptyset$ and $p \rightarrow
q$ if $p \overset{x}{\rightarrow} q $ for some $x\in \Sigma^*$.
A \emph{strongly connected
component (SCC)} $C\subseteq Q$ of $\mc{A}$ as usual is a maximal (w.r.t. inclusion) subset of states such that if
$p,q \in C$ then $p\rightarrow q$ and $q \rightarrow p$.

The language of $\mc{A}$ when read as $\NFA$ is defined as $L(\mc{A}_\NFA):=\{x\in\Sigma^*
\mid P(Q_0,x,F) \neq \emptyset \}$. The $\omega$-language of $\mc{A}$ when read as
nondeterministic Büchi automaton ($\NBA$) is defined as
$L(\mc{A}_\NBA):=\{x\in\Sigma^\omega\mid
\exists\pi\in P^\omega(Q_0,x), \Inf(\op{st}(\pi)) \cap F \neq \emptyset\}$.
The set $\Runs(\mc{A}, x)$ contains accepting runs of $\mc{A}$ on $x$, i.e. all paths from
an initial state that are labelled with $x$ and satisfy the corresponding
acceptance condition. We say that a set of runs is \emph{separated (at time $i$)}
when the prefixes of length $i$ of those runs are pairwise different.

We say $\mc{A}$ is \emph{trim}, if each path from an initial state is a prefix of an
accepting run and if $\mc{A}$ is an $\NBA$ we additionally require that each accepting
state is on a cycle. This means for $\NFA$ that an accepting
state is always reachable and for $\NBA$ that a cycle with an
accepting state is always reachable and no state is uselessly marked as accepting.
In the following, let $\mc{A}=(Q,\Sigma,\Delta,Q_0,F)$ be some arbitrary finite automaton
if not specified otherwise.

\section{Ambiguity of Büchi automata}
\label{sec:ambiguity}


We first give some basic definitions concerning the degree of
ambiguity of $\NFA$ and $\NBA$. Then we restate some results on the
ambiguity of $\NFA$ given in \cite{weber1991degree}, and continue with the analysis of degrees of ambiguity for $\NBA$.

By $\aleph_0$ we denote the cardinality of the natural numbers and by $2^{\aleph_0}$ the
cardinality of the real numbers.
The \emph{degree of ambiguity} of automaton $\mc{A}$ on a word $x$ is defined as
$\da(\mc{A},x):=|\op{Runs}(\mc{A},x)|$ and the degree of ambiguity of an automaton is given by
$\da(\mc{A}):=\sup_x\{\da(\mc{A},x)\}$ over all possible words $x$. Note that the result
depends on whether we consider $\mc{A}$ to be an $\NFA$ or an $\NBA$ -- in the first case we consider finite input words, in the second case infinite words. If
$\da(\mc{A})<\aleph_0$, $\mc{A}$ is \emph{finitely ambiguous}. We say $\mc{A}$ is
\emph{$k$-ambiguous} if $\da(\mc{A})=k$ and \emph{unambiguous} for $k=1$.
If $\da(\mc{A})\geq \aleph_0$, $\mc{A}$ is \emph{infinitely ambiguous}.

For infinitely ambiguous $\mc{A}_\NFA$ let the \emph{degree of polynomial
ambiguity} $\dpa(\mc{A})$ be the smallest $k\in\Nat$ such that for all $w\in\Sigma^*,
|w|=n, \da(\mc{A},w)\in\mc{O}(n^k)$. $\mc{A}_\NFA$ is \emph{polynomially ambiguous} if
$\dpa(\mc{A})<\infty$, otherwise \emph{exponentially ambiguous}.

So polynomial ambiguity on finite words means that there is no constant upper bound on
the number of accepting runs that holds for all words, but there is a polynomial function
bounding the number of accepting runs for words with a fixed length. Similarly,
exponential ambiguity means that no such polynomial bound exists.
As shown in \cite{weber1991degree}, the different types of ambiguity for
$\NFA$ can be characterized by the following state patterns:

\begin{definition}[Infinite ambiguity conditions for $\NFA$]
  \label{def:idaeda}
  \ \\[-6mm]
  \begin{itemize}
  \item
  $\mc{A}$ satisfies $\IDA$ (infinite degree of ambiguity) if there are $p,q\in Q,p\neq q, v\in\Sigma^*$
  and three paths $\pi_1\in P(p,v,p),\pi_2\in P(p,v,q),\pi_3\in P(q,v,q)$.
  We call a tuple $(p,q,v,\pi_{1,2,3})$ an \emph{$\IDA$ pattern}. 

  \item
  $\mc{A}$ satisfies $\EDA$ (exponential degree of ambiguity) if there is $p\in Q,v\in\Sigma^*$ and two cycles
  $\pi_1,\pi_2\in P(p,v,p),\pi_1\neq \pi_2$.
  We call a tuple $(p,v,\pi_{1,2})$ an \emph{$\EDA$ pattern}. 
  \end{itemize}
  \vspace{-2mm}
  The corresponding paths $\pi_i$ may be omitted when not required.
\end{definition}

\begin{theorem}[\cite{weber1991degree}]
  \label{thm:ambclasses}
  \ \\[-6mm]
  \begin{enumerate}
    \item \label{fact:eda_ida} If $\mc{A}$ satisfies $\EDA$, then $\mc{A}$ satisfies $\IDA$.
    \item
      $\mc{A}$ satisfies $\EDA$ $\Leftrightarrow \mc{A}_\NFA$ is exponentially ambiguous.
    \item
      $\mc{A}$ satisfies $\lnot\EDA$ and $\IDA$ $\Leftrightarrow \mc{A}_\NFA$ is polynomially ambiguous.
    \item
      $\mc{A}$ satisfies $\lnot\IDA$ $\Leftrightarrow \mc{A}_\NFA$ is finitely ambiguous.
  \end{enumerate}
\end{theorem}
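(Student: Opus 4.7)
The plan is to handle the four items together, since they are tightly linked. In each case the easy direction ``pattern implies ambiguity growth'' is a pumping argument, while the hard direction is the contrapositive: the absence of the pattern must force the ambiguity bound. Throughout, the main technical device will be the product automaton $\mc{A}\times\mc{A}$, since the number of ordered pairs of distinct runs of $\mc{A}$ on a word $x$ equals the number of paths of $\mc{A}\times\mc{A}$ on $x$ that never touch the diagonal.

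For (1), given an $\EDA$ witness $(p,v,\pi_1,\pi_2)$, I pick the first index $i$ at which $\pi_1$ and $\pi_2$ disagree and split $v=uw$ accordingly, letting $r:=\pi_1(i)$ and $s:=\pi_2(i)$, so $r\neq s$. Recombining the four half-paths $p\xrightarrow{u}r$, $r\xrightarrow{w}p$, $p\xrightarrow{u}s$, $s\xrightarrow{w}p$ along the shifted word $wu$ produces cycles $r\xrightarrow{wu}r$ and $s\xrightarrow{wu}s$ together with the crossings $r\xrightarrow{wu}s$ and $s\xrightarrow{wu}r$, which is (more than) an $\IDA$ pattern with parameters $(r,s,wu)$.

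For the easy directions of (2)--(4): iterating the $\EDA$ pattern gives $2^n$ distinct $p\to p$ paths on $v^n$, and by trimness one prepends a path from $Q_0$ to $p$ and appends a path from $p$ to $F$, yielding exponentially many accepting runs on a word of length $\mc{O}(n)$. For $\IDA$, the compositions $\pi_1^{i}\pi_2\pi_3^{n-1-i}$ for $0\le i<n$ give $n$ distinct $p\to q$ paths on $v^n$, which after the same prefix/suffix padding witness infinite ambiguity; if additionally $\neg\EDA$ holds, this linear count turns out to be tight up to a polynomial factor in $|Q|$.

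The hard direction is to show that the absence of a pattern forces the corresponding ambiguity bound. For (4), I would argue that $\neg\IDA$ rules out any off-diagonal SCC of $\mc{A}\times\mc{A}$ that is reachable from a diagonal one, so two runs that ever separate cannot be pumped in parallel; the number of separated run pairs is then bounded by a function of $|Q|$ alone, giving finite ambiguity. For (3), the refinement measures the length of chains of $\IDA$ witnesses in $\mc{A}\times\mc{A}$: each step along such a chain contributes a linear factor, and $\neg\EDA$ is exactly what forbids a single SCC from contributing an exponential factor, so the polynomial degree matches the longest such chain. The main obstacle will be making these chain arguments quantitative, which in the Weber--Seidl proof amounts to a careful induction over the SCC-DAG of $\mc{A}\times\mc{A}$ with the right invariant separating off-diagonal from diagonal contributions.
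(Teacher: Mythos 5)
First, a remark on the comparison itself: the paper does not prove \Cref{thm:ambclasses} — it is quoted from Weber and Seidl \cite{weber1991degree} — so there is no in-paper proof to measure your attempt against. Judged on its own merits, your item (1) is correct (the rotation/recombination argument is essentially the same one the paper's appendix uses for the decorated version $\EDA_F \Rightarrow \IDA_F$), and all of your ``pattern implies growth'' directions are fine: iterating the two cycles gives $2^n$ runs on words of linear length, and $\pi_1^i\pi_2\pi_3^{n-1-i}$ gives $n$ runs, both completed to accepting runs via the (implicit, but consistent with the paper) trimness assumption.

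The genuine gap is in the upper-bound directions, and it is concentrated in one false claim: that $\lnot\IDA$ rules out any off-diagonal SCC of $\mc{A}\times\mc{A}$ reachable from a diagonal one, ``so two runs that ever separate cannot be pumped in parallel.'' Consider states $r,p,q$ with transitions $r\overset{a}{\rightarrow}p$, $r\overset{a}{\rightarrow}q$, $p\overset{b}{\rightarrow}p$, $q\overset{b}{\rightarrow}q$, initial state $r$ and $F=\{p,q\}$: the off-diagonal state $(p,q)$ lies on a cycle of the product and is reachable from the diagonal state $(r,r)$, and the two runs on $ab^n$ separate at once and are pumped in parallel forever — yet the automaton is $2$-ambiguous and satisfies $\lnot\IDA$. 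The issue is that the $\IDA$ pattern also demands a \emph{crossing} path $p\overset{v}{\rightarrow}q$ labelled by the same word $v$ as the two loops; this is invisible in the two-fold product and is detected in $\mc{A}^3$ (reachability of $(p,q,q)$ from $(p,p,q)$), which is how \cite{allauzen2008general} test it. (Your product characterization is correct for $\EDA$, where a diagonal and an off-diagonal state in one SCC do suffice, but not for $\IDA$.) As a result the bound on the number of separated runs in (4) does not follow from your premise, and the quantitative chain/SCC induction for (2) and (3) is only gestured at. The actual argument for $\lnot\IDA\Rightarrow$ finite ambiguity is a pigeonhole extraction of the triple pattern from a word with many separated accepting runs — find two runs that revisit the same pair of distinct states over the same factor while a third run crosses between them — which is the same style of argument the paper's appendix carries out for the decorated analogue in \Cref{lem:idafc}.
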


A novel aspect when measuring ambiguity of Büchi automata is that there are multiple
degrees of infinite ambiguity when considering infinite words, as a single infinite word
can have infinitely many different accepting runs, which is not possible with finite
words. In fact, for an infinite word the number of accepting runs can even be uncountable.
We will see that for each infinite word the cardinality
of the set of different accepting runs is either finite, equal to $\aleph_0$ or equal to
$2^{\aleph_0}$.

Formally, if there exists $w\in\Sigma^\omega$ with $\da(\mc{A},w)=2^{\aleph_0}$,
i.e. some word $w$ has uncountably many accepting runs, we say that $\mc{A}$ is
\emph{uncountably ambiguous} and we write $\da(\mc{A})=2^{\aleph_0}$.
If $\mc{A}$ is not uncountably ambiguous, but there exists some word $w$ with
$\da(\mc{A},w)=\aleph_0$, i.e.\ $w$ has a countably infinite number of accepting runs,
$\mc{A}_\NBA$ is called \emph{strict-countably ambiguous}.
Later we will show that these ambiguity cases can be characterized using
the following refinements of the state patterns in \Cref{def:idaeda}:

\begin{definition}[Additional ambiguity conditions for $\NBA$]
  \label{def:idafedaf}
  \ \\[-6mm]
  \begin{itemize}
  \item
  $\mc{A}$ satisfies $\IDA_{F}$ if it has an $\IDA$ pattern
  $(p,q,v,\pi_{1,2,3})$ such that $q\in F$.

  \item
  $\mc{A}$ satisfies $\EDA_{F}$ if it has an $\EDA$ pattern
  $(p,v,\pi_{1,2})$ such that $p\in F$.
  \end{itemize}
\end{definition}

If the ambiguity of the $\NBA$ is not finite, but there are also no infinite words with
at least $\aleph_0$ accepting runs, we call the $\NBA$ \emph{limit-countably
ambiguous}. In this case we can adapt the notions of polynomial and exponential ambiguity.
These cases can not be defined in exactly the same way as for $\NFA$ because we only consider words that are
infinite. But we can still preserve the spirit of the definitions of polynomial and
exponential ambiguity by defining them as bounds on the maximal growth of ambiguity
with increasing prefix length of words, instead of whole words.

So formally,
if $\mc{A}_\NBA$ is not finitely ambiguous and not at least strict-countably ambiguous,
it is limit-countably ambiguous. More specifically,
if the function $f$ is an upper bound such that for all $w\in
L(\mc{A}_\NBA)$ and $i\in \Nat$ we have
$|\{\pi\in P(Q_0,\Pref_i(w),Q)\mid \op{Runs}(\mc{A}[\op{trg}(\pi)]_\NBA,
\Suf_{i+1}(w))\neq\emptyset \}| \leq f(i)$, we say that $\mc{A}_\NBA$ it is polynomially
ambiguous if one can choose $f(i) := c\cdot i^d$ with constants $c$ and $d$, and
exponentially ambiguous otherwise.


\begin{figure}[t]
  \begin{center}
    (a)
\begin{tikzpicture}[baseline={([yshift=-.5ex]current bounding box.center)},shorten >=1pt,
  node distance=1cm,inner sep=1pt,on grid,auto]
  \node[state,initial,accepting, initial text={}] (q0)   {$q_0$};
   \node[state] (q1) [right=of q0] {$q_1$};
   \node[state] (q2) [above=of q0] {$q_2$};
   \node[state,accepting] (q3) [above=of q1] {$q_3$};
    \path[->]
    (q0) edge [] node {a} (q1)
    (q0) edge [] node {a,b} (q2)
    (q2) edge [] node [near end] {b} (q1)
    (q1) edge [bend left] node {b,c} (q0)
    (q2) edge [bend left] node {c} (q3)
    (q3) edge [bend left,swap] node {a} (q2)
    ;
\end{tikzpicture}
    \ (b)
\begin{tikzpicture}[baseline={([yshift=-.5ex]current bounding box.center)},shorten >=1pt,
  node distance=1cm,inner sep=1pt,on grid,auto]
  \node[state,initial, initial text={}] (q0)   {$q_0$};
   \node[state] (q1) [right=of q0] {$q_1$};
   \node[state,accepting] (q2) [right=of q1] {$q_2$};
    \path[->]
    (q0) edge [loop above] node {a} (q0)
    (q0) edge [] node {a} (q1)
    (q1) edge [bend left] node {b} (q0)
    (q1) edge [loop above] node {a} (q1)
    (q1) edge [] node {c} (q2)
    (q2) edge [loop above] node {c} (q2)
    ;
\end{tikzpicture}
    \ (c)
\begin{tikzpicture}[baseline={([yshift=-.5ex]current bounding box.center)},shorten >=1pt,
  node distance=1cm,inner sep=1pt,on grid,auto]
  \node[state,initial, initial text={}] (q0)   {$q_0$};
   \node[state] (q1) [right=of q0] {$q_1$};
   \node[state,accepting] (q2) [above=of q1] {$q_2$};
   \node[state,accepting] (q3) [above=of q0] {$q_3$};
    \path[->]
    (q0) edge [loop below] node {a} (q0)
    (q0) edge [] node {a} (q1)
    (q0) edge [] node [] {b} (q2)
    (q1) edge [loop below] node {a} (q1)
    (q1) edge [] node {b} (q2)
    (q0) edge [] node {b} (q3)
    (q3) edge [loop left] node {b} (q3)
    ;
\end{tikzpicture}
    \ (d)
\begin{tikzpicture}[baseline={([yshift=-.5ex]current bounding box.center)},shorten >=1pt,
  node distance=1cm,inner sep=1pt,on grid,auto]
  \node[state,initial,accepting, initial text={}] (q0)   {$q_0$};
   \node[state] (q1) [right=of q0] {$q_1$};
   \node[state] (q2) [above=of q0] {$q_2$};
    \path[->]
    (q0) edge [] node {a} (q1)
    (q0) edge [] node {a,b} (q2)
    (q2) edge [] node [near end] {b} (q1)
    (q1) edge [bend left] node {b} (q0)
    ;
\end{tikzpicture}
  \end{center}
  \caption{
    (a) The word $(ab)^\omega$ is accepted unambiguously, while $ab^\omega$ has two accepting
    runs due to the choice of the first transition. The word $(ac)^\omega$ has
    strict-countable ambiguity as $q_0 \overset{ac}{\rightarrow} q_0,
    q_0\overset{ac}{\rightarrow} q_3$ and $q_3 \overset{ac}{\rightarrow} q_3$,
    which is an $\IDA_F$ pattern. The word $(acabb)^\omega$ has uncountably many accepting
    runs due to two paths $q_0\overset{acabb}{\rightarrow} q_0$, implying $\EDA_F$ and
    therefore the automaton is uncountably ambiguous.
    (b) This automaton has an $\IDA$ pattern $(q_0,q_1,a)$ and an $\EDA$ pattern $(q_0,
    aab)$, but no $\IDA_F$ nor $\EDA_F$. The word $a^*ac^\omega$ is
    polynomially ambiguous and $(aab)^*ac^\omega$ is exponentially ambiguous, as the
    corresponding pattern can be traversed in different ways a finite number of times,
    before reading the first $c$.
    (c) Counterexample of \Cref{lem:nfanba} for non-trim automata: $L(\mc{A}_\NFA)=a^*
    b^+$ and $\mc{A}_\NFA$ is not finitely ambiguous, because each word $a^n b$ has for
    $0<i\leq n$ the accepting runs $q_0^i q_1^{n-i}q_2$, while
    $L(\mc{A}_\NBA)=a^*b^\omega$ is unambiguous, as the only accepting path must have a
    state sequence of the form $q_0^*q_3^\omega$.
    (d) As $\NFA$ this automaton is unambiguous, while as $\NBA$ there are two accepting
    runs on $ab^\omega$.
  }

  \label{fig:ambig}
\end{figure}
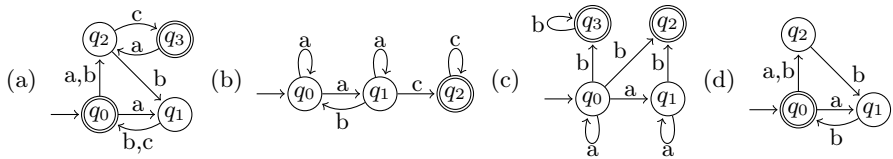

\begin{figure}
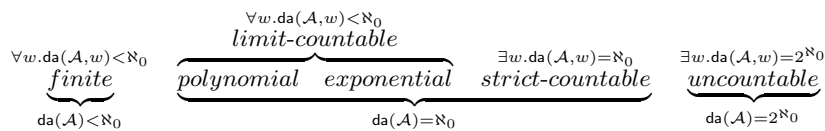

  \begin{center}
    \(
    \overset{\forall w.\da(\mc{A},w)<\aleph_0}{
      \underbrace{finite}_{\da(\mc{A}) < \aleph_0}
    }
    \quad
    \underbrace{
      \overbrace{polynomial \quad exponential}^{
        \overset{\scriptstyle{\forall w.\da(\mc{A},w)<\aleph_0}}{\textstyle{limit\text{-}countable}}
      }
      \quad
      \overset{\exists w.\da(\mc{A},w)=\aleph_0}{strict\text{-}countable}
    }_{\da(\mc{A}) = \aleph_0}
    \quad
    \overset{{\exists w.\da(\mc{A},w)=2^{\aleph_0}}}{
      \underbrace{uncountable}_{\da(\mc{A}) = 2^{\aleph_0}}
    }
    \)
  \end{center}
  \caption{ Illustration of the ambiguity hierarchy for $\NBA$.
  The five classes (without "limit-countable") are pairwise disjoint.
  The depicted order reflects the meaning of e.g. "at most polynomial amb.", which includes
  finite and polynomial ambiguity, or "at least strict-countable amb.", which includes
  strict-countable and uncountable ambiguity.}
  \label{fig:hierarchy}
\end{figure}

Consider \Cref{fig:ambig} (a,b) to see examples of different ambiguity types on infinite words.
In the following we will justify the ambiguity claims in the examples,
by relating the patterns from \Cref{def:idafedaf} to the various cases that emerge for
$\NBA$. The resulting hierarchy is illustrated in \Cref{fig:hierarchy} and
summarized in \Cref{thm:newambclasses}.
We start by establishing that the notion of finite ambiguity of an automaton is closely
related for $\NFA$ and $\NBA$ under the condition that the considered automaton must be
trim as $\NBA$.

\begin{lemma}
  \label{lem:nfanba} Let $\mc{A}$ be trim as $\NBA$. Then $\mc{A}_\NFA$ is finitely
  ambiguous if and only if $\mc{A}_\NBA$ is finitely ambiguous.
\end{lemma}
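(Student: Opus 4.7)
The plan is to reduce to \Cref{thm:ambclasses}, which identifies finite ambiguity of $\mc{A}_\NFA$ with the absence of an $\IDA$ pattern in $\mc{A}$, and then establish both directions in terms of $\IDA$: the implication ``$\mc{A}_\NFA$ finitely ambiguous $\Rightarrow \mc{A}_\NBA$ finitely ambiguous'' will not need trimness, while the reverse is handled via its contrapositive using trimness.

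For the forward direction I would apply \Cref{thm:ambclasses} to every variant of $\mc{A}$ obtained by choosing a single $p \in Q$ as initial and a single $q \in Q$ as accepting state. Since $\IDA$ is a purely internal property of $\mc{A}$, insensitive to the choice of initial and final states, none of these variants has $\IDA$ either, and I obtain a uniform constant $K$ bounding the number of paths between any two states on any finite word. Then if some $w \in \Sigma^\omega$ had more than $|Q_0|\cdot|Q|\cdot K$ distinct accepting $\NBA$ runs, pairwise distinctness of those runs forces their prefixes of length $M$ to be pairwise distinct for some common $M$; a pigeonhole on the endpoints of those prefixes yields more than $|Q_0| \cdot K$ prefixes ending in a single state $q^* \in Q$, contradicting the bound of $|Q_0|\cdot K$ on paths from $Q_0$ to $q^*$ on $\Pref_M(w)$.

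For the reverse implication, assume $\mc{A}$ has an $\IDA$ pattern $(p, q, v, \pi_1, \pi_2, \pi_3)$, which we may take to involve only states reachable from $Q_0$ by restricting to the reachable subautomaton of $\mc{A}$ (which still is trim as $\NBA$). Trimness then supplies a finite access word $u$ with a path $Q_0 \to p$, and an infinite word $z_q \in \Sigma^\omega$ on which $\mc{A}[q]_\NBA$ has an accepting run: the latter exists because the path along $u$ followed by $\pi_2$ reaches $q$ and, by trimness, extends to an accepting $\NBA$ run whose suffix from $q$ still visits $F$ infinitely often. For each $k \geq 1$ the word $x_k := u v^k z_q$ then admits $k$ pairwise distinct accepting runs, indexed by the time $i \in \{0, \ldots, k-1\}$ at which the run switches from $p$ to $q$: namely $u \cdot \pi_1^i \cdot \pi_2 \cdot \pi_3^{k-i-1}$ followed by the fixed accepting run on $z_q$. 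They all accept because they share the accepting tail, so $\da(\mc{A}_\NBA, x_k) \geq k$ and $\mc{A}_\NBA$ is infinitely ambiguous.

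The main obstacle I foresee lies in the forward direction: a tempting Ramsey- or K\"onig-style argument on the run-prefix tree gets complicated because an $\NBA$-run's prefix need not end in $F$, so it does not directly witness $\NFA$-ambiguity. The clean route is instead to harvest a single universal path-count bound from the $\NFA$-level $\IDA$-characterisation applied uniformly to all choices of initial and final states, which then transfers to the $\NBA$ setting by pigeonhole on endpoint states. The reverse direction relies essentially on trimness as $\NBA$ (compare \Cref{fig:ambig}(c)) to provide both the access word $u$ and the accepting tail $z_q$ used to duplicate runs.
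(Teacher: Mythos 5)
Your proof is correct, but it takes a genuinely different route from the paper's. The paper's argument is self-contained and does not invoke \Cref{thm:ambclasses} at all: for one direction it picks an infinite word with at least $k|Q|$ accepting $\NBA$ runs, separates them, pigeonholes their prefixes onto a common state $p$, and extends those prefixes by a single path from $p$ into $F$ to obtain a finite word with $k$ accepting $\NFA$ runs; for the other direction it takes $k$ accepting finite runs ending in a common state and extends them by a common accepting infinite continuation, which is exactly where trimness enters. You instead route both directions through the $\IDA$ characterization. In the forward direction this costs you an extra layer — the uniform path-count bound $K$ harvested from $\lnot\IDA$ applied to all single-initial/single-final variants — where the paper simply extends the coinciding prefixes into $F$ and reads off $\NFA$-ambiguity directly; your use is nonetheless sound, because the direction of \Cref{thm:ambclasses}(4) you need ($\lnot\IDA$ implies finite ambiguity) is the one that survives even when the variants are not trim. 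In the reverse direction, pumping the $\IDA$ pattern into the words $uv^kz_q$ is a more explicit witness construction than the paper's (it is essentially the device the paper deploys later for \Cref{lem:idafc}), and your observation that trimness as $\NBA$ makes the reachable subautomaton trim as $\NFA$ — so that the pattern can legitimately be extracted and placed among reachable states — is the right justification. In short: the paper's proof is more elementary and avoids any dependence on the Weber--Seidl characterization; yours makes the structural role of the $\IDA$ pattern explicit at the price of invoking \Cref{thm:ambclasses} in both directions.
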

\begin{proof}
  For the first claim, assume $\mc{A}$ is not a finitely ambiguous $\NBA$. Then for all $k\in\Nat$
  there is a word with at least $k$ different runs.
  Pick a word $w\in\Sigma^\omega$ with at least $k|Q|$ different runs and a time where all
  these runs are separated. Then there are at least $k$ runs that are in the same
  state $p\in Q$. We can extend these prefixes by a path from $p$ to some $q\in F$,
  obtaining a word that is accepted by at least $k$ runs of the corresponding $\NFA$.

  The second claim is similar, with the difference that $k$ different finite runs on
  some finite word that end in the same state are extended to accepting infinite runs (this
  requires that $\mc{A}$ is trim).
  \qed
\end{proof}

See \Cref{fig:ambig} (c) for a non-trim counterexample and also notice
that in (d) the automaton has a different finite ambiguity as $\NFA$ than as $\NBA$.
Hence, in general $\da(\mc{A}_\NBA) \neq \da(\mc{A}_\NFA)$ and the calculation of the
exact degree must also be adapted to the $\NBA$ setting. We will sketch a corresponding
procedure later in the context of \Cref{thm:ambpspace}.

We now state some technical lemmas for establishing the connection between the state patterns and the degrees of ambiguity summarized in \Cref{thm:newambclasses} further below.

\begin{restatable}{lemma}{lemedafimpliesidaf}
  \label{lem:edaf_implies_idaf}
    If $\mc{A}$ satisfies $\EDA_{F}$, then $\mc{A}$ satisfies $\IDA_{F}$.
\end{restatable}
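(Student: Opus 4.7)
My plan is to apply the classical rotation construction that turns any $\EDA$ pattern into an $\IDA$ pattern, and then to verify that the accepting-state condition is preserved. Starting from an $\EDA_F$ witness $(p,v,\pi_1,\pi_2)$ with $p\in F$, I would first locate the smallest position $0<i<|v|$ at which the state sequences of $\pi_1$ and $\pi_2$ disagree; such an $i$ must exist because the two paths share source $p$, target $p$, and label $v$ yet are distinct as transition sequences, so their state sequences must differ strictly between the coinciding endpoints. Setting $q_j:=\pi_j(i)$ gives $q_1\neq q_2$, and the splits $v=uv'$ with $|u|=i$ together with $\pi_j=\alpha_j\beta_j$ for $\alpha_j\in P(p,u,q_j)$ and $\beta_j\in P(q_j,v',p)$ supply the building blocks of the pattern.

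Switching to the rotated word $w:=v'u$, the four compositions $\beta_k\alpha_\ell$ lie in $P(q_k,w,q_\ell)$; selecting $\beta_1\alpha_1$, $\beta_1\alpha_2$, and $\beta_2\alpha_2$ yields the $\IDA$ pattern $(q_1,q_2,w,\beta_1\alpha_1,\beta_1\alpha_2,\beta_2\alpha_2)$. The key observation is that each cycle $\beta_j\alpha_j$ traverses the state $p$ at its midpoint position $|v'|$; combined with $p\in F$, this ensures that every cycle of the new pattern visits an accepting state, which is the $F$-property we need to carry over from the $\EDA_F$ hypothesis.

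To extract a genuine $\IDA_F$ witness from the pattern above I would split into cases. In the easy subcase $p\in\{q_1,q_2\}$ I relabel so that $q_2=p$, and the pattern is immediately of the required form with $q=p\in F$. Otherwise, I decompose the cycle $\beta_2\alpha_2$ around its $F$-visit at $p$ and apply the rotation trick a second time to re-anchor the pattern at $p$; here the distinctness $q_1\neq q_2$ supplies the non-trivial connector that keeps the two base states of the new pattern distinct. The hardest part will be this last case: checking that the second rotation actually produces all three pieces of an $\IDA$ pattern — a cycle at a new state, a connecting path to $p$, and a cycle at $p$ — on one common word, which requires carefully picking the rotation of $v$ and combining the available sub-paths of $\pi_1$ and $\pi_2$ so that the cycle/path structure is preserved simultaneously at both base states.
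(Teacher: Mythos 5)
Your first two paragraphs are correct and essentially reproduce the paper's construction: rotating the two cycles at the first position where their state sequences branch yields the $\IDA$ pattern $(q_1,q_2,v'u,\beta_1\alpha_1,\beta_1\alpha_2,\beta_2\alpha_2)$, all three paths of which visit $p\in F$ at position $|v'|$ (your single copy of the rotated word works just as well as the doubled word $\hat v\hat v$ the paper uses). The easy subcase of your last paragraph also goes through, since the fourth composition $\beta_2\alpha_1\in P(q_2,v'u,q_1)$ lets you swap the roles of $q_1$ and $q_2$ when it is $q_1$ that equals $p$.

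The genuine gap is the case $p\notin\{q_1,q_2\}$, and it is not a matter of picking the rotation carefully: the step you defer cannot be carried out. To re-anchor at $p$ you must cut the cycle $\beta_2\alpha_2$ at its visit to $p$, i.e.\ at position $|v'|$, and cut the other two paths of the pattern at the same position; but $\beta_1\alpha_1$ is also at $p$ there, so both base states of the re-anchored pattern collapse to $p$ and you only recover two distinct cycles at $p$ --- the $\EDA_F$ pattern you started from, not an $\IDA$ pattern. No other recombination of $\alpha_1,\alpha_2,\beta_1,\beta_2$ helps, because every cycle at $p$ built from these pieces carries a label in $(uv')^+$ while every cycle at $q_1$ or $q_2$ carries a label in $(v'u)^+$, and these sets are disjoint unless $u$ and $v'$ commute. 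Indeed the target statement can fail outright: take $Q=\{r,s_1,s_2\}$, $F=\{r\}$ with transitions $r\overset{a}{\rightarrow}s_1$, $r\overset{a}{\rightarrow}s_2$, $s_1\overset{b}{\rightarrow}r$, $s_2\overset{b}{\rightarrow}r$. The two $ab$-labelled cycles at $r$ witness $\EDA_F$, yet every cycle at $r$ reads a word in $(ab)^+$ and every cycle at $s_1$ or $s_2$ reads a word in $(ba)^+$, so no single word labels both a cycle at the accepting state and a cycle at another state, and hence no $\IDA$ pattern with second component in $F$ exists. What your construction actually establishes --- and all that can be established --- is an $\IDA$ pattern whose cycle $\pi_3$ merely \emph{visits} an accepting state. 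The paper delegates the remaining re-anchoring to its pattern-shifting lemma (\Cref{lem:shifting}), whose proof silently assumes $p'\neq q'$ and runs into exactly the degeneracy above; so the difficulty you flagged as ``the hardest part'' is real, and you should stop at the end of your second paragraph and work with that weaker form of the conclusion.
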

\vspace{-4mm}
\begin{proof}[sketch]
  Similar argument as for the implication $\EDA \Rightarrow \IDA$ in \cite{weber1991degree}.
  \qed
\end{proof}

\begin{restatable}{lemma}{lemnoedaboundamb}
  \label{lem:noeda_boundamb}
  If $\mc{A}$ satisfies $\lnot\EDA_F$, then for all $q\in F, w\in \Sigma^\omega$ the number of
  infinite paths of $\mc{A}[q]$ visiting $q$ infinitely often is at most $|Q|$.
\end{restatable}
\vspace{-4mm}
\begin{proof}[sketch]
  Shown by simple construction of an $\EDA_F$ pattern in case of more than $|Q|$ such paths.
  \qed
\end{proof}

Now we can relate the extended patterns to the new infinite ambiguity cases.

\begin{lemma}
  \label{lem:edafunc}
    $\mc{A}$ satisfies $\EDA_F \Leftrightarrow \mc{A}_\NBA$ is uncountably ambiguous.
    Furthermore, if $\mc{A}$ does not satisfy $\EDA_F$, then $\mc{A}_\NBA$ is at most
    strict-countably ambiguous.
\end{lemma}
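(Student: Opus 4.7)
The forward direction is handled by turning an $\EDA_F$ pattern $(p,v,\pi_1,\pi_2)$ into uncountably many accepting runs on a single word. Picking a path $\sigma$ from an initial state to $p$ (the existence of which we may assume, since states unreachable from $Q_0$ contribute no runs to $\mc{A}_\NBA$) and setting $u:=\op{lbl}(\sigma)$, consider $w=uv^\omega$. For each sequence $b\in\{1,2\}^\omega$ the concatenation $\sigma\pi_{b_1}\pi_{b_2}\pi_{b_3}\cdots$ is an infinite path on $w$ that visits $p\in F$ infinitely often, hence an accepting run. Since $\pi_1\neq\pi_2$ as transition sequences, differing choices in $b$ yield paths that differ in some transition, so the $2^{\aleph_0}$ sequences give $2^{\aleph_0}$ pairwise distinct accepting runs of $w$.

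For both the converse of the first claim and the ``furthermore'' statement, I would show that if $\mc{A}$ does not satisfy $\EDA_F$, then $\da(\mc{A},w)\leq\aleph_0$ for every $w\in\Sigma^\omega$. Every accepting run $\rho$ must visit some $q\in F$ infinitely often. Since $F$ is finite, one may classify $\rho$ by a pair $(q,i)\in F\times\Nat$ where $q\in\Inf(\op{st}(\rho))\cap F$ and $i$ is the first position at which $\rho$ is in $q$. The prefix of $\rho$ up to position $i$ lies in $P(Q_0,\Pref_i(w),q)$, a finite set, and the suffix of $\rho$ from position $i$ onward is an infinite path in $\mc{A}[q]$ labelled $\Suf_{i+1}(w)$ that visits $q$ infinitely often. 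By \Cref{lem:noeda_boundamb} there are at most $|Q|$ such suffixes. Hence every class of this partition is finite, and the whole set of accepting runs on $w$ is a countable union of finite sets, of cardinality at most $\aleph_0$.

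The main obstacle is the reachability issue in the forward direction: the definition of the $\EDA_F$ pattern does not itself demand that $p$ be reachable from $Q_0$. This is easily handled by first passing to the reachable part of the automaton, which preserves $\da$ as well as the presence and absence of the pattern among reachable states. A secondary point is to verify that distinct sequences $b\in\{1,2\}^\omega$ really produce distinct runs; this relies on $\pi_1\neq\pi_2$ as transition sequences, so that any two runs disagreeing on some $b_i$ already disagree on a transition inside the $i$-th $v$-block. Together the two directions also yield the ``furthermore'' clause, since the contrapositive of the forward direction together with the count in the second paragraph shows that $\neg\EDA_F$ forces $\da(\mc{A},w)\le\aleph_0$ for all $w$, i.e.\ $\mc{A}_\NBA$ is at most strict-countably ambiguous.
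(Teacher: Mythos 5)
Your proof is correct and follows essentially the same route as the paper: the forward direction builds $2^{\aleph_0}$ accepting runs on $uv^\omega$ by choosing $\pi_1$ or $\pi_2$ on each $v$-block, and the converse bounds the runs on any $w$ by a countable union over positions $i$ and states $q\in F$ of (finitely many prefixes in $P(Q_0,\Pref_i(w),q)$) times (at most $|Q|$ continuations, via \Cref{lem:noeda_boundamb}). Your explicit checks that $p$ is reachable and that distinct choice sequences yield distinct runs are points the paper leaves implicit, but they do not change the argument.
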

\begin{proof}
  For one direction,
  let $(p,v,\pi_{1,2})$ be an $\EDA$ pattern satisfying $\EDA_F$.
  Pick some $u\in \Sigma^*, \pi_0\in P(Q_0,u,p)$. Clearly, $uv^\omega \in L(\mc{A}_\NBA)$.
  Observe that for $v^\omega =v_0v_1\ldots$ each $v_i$ can be consumed by taking either $\pi_1$
  or $\pi_2$. Hence, the number of runs on $uv^\omega$ is uncountable
  , i.e., $\da(\mc{A}_\NBA)=\da(\mc{A}_\NBA,
  uv^\omega) = 2^{\aleph_0}$.

  For the other direction, let $w\in L(\mc{A}_\NBA)$ and assume $\EDA_F$ does not hold.
  For each $i\in\Nat$, let $x=\Pref_i(w)$ and notice that the number of different
  paths $P_x=P(Q_0,x,q)$ must be finite for each $q\in F$, as $x$ is finite. By
  \Cref{lem:noeda_boundamb}, for each such path there are at most $|Q|$ continuations to
  infinite runs that visit $q$ infinitely often. It follows that $\mc{A}_\NBA$ is at most
  strict-countably ambiguous.
  \qed
\end{proof}

\begin{restatable}{lemma}{lemidafc}
  \label{lem:idafc}
    $\mc{A}$ satisfies IDA$_F$ $\Leftrightarrow \mc{A}_\NBA$
    is at least strict-countably ambiguous.
\end{restatable}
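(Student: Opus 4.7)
I treat the two implications separately.

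For $\IDA_F \Rightarrow$ ``at least strict-countably ambiguous'', I construct $\aleph_0$ distinct accepting runs on a single word, in the spirit of the proof of \Cref{lem:edafunc}. Given an $\IDA_F$ pattern $(p,q,v,\pi_{1,2,3})$ with $q \in F$ and a prefix $\pi_0 \in P(Q_0,u,p)$ for some $u \in \Sigma^*$ (unreachable patterns can be discarded by restricting to the trim sub-automaton), on the input $uv^\omega$ the family $\rho_i := \pi_0 \cdot \pi_1^i \cdot \pi_2 \cdot \pi_3^\omega$ for $i \in \Nat$ provides the witnesses: each $\rho_i$ is accepting since $\pi_3^\omega$ visits $q \in F$ infinitely often, and different $i$ produce different runs because the switch from $\pi_1$ to $\pi_2$ happens at a different copy of $v$. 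Hence $\da(\mc{A},uv^\omega) \geq \aleph_0$.

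For the converse, suppose some $w \in \Sigma^\omega$ has $\da(\mc{A},w) \geq \aleph_0$. If $\mc{A}_\NBA$ is uncountably ambiguous, \Cref{lem:edafunc} yields $\EDA_F$ and then \Cref{lem:edaf_implies_idaf} yields $\IDA_F$, so I may assume $\da(\mc{A},w) = \aleph_0$ and, by \Cref{lem:edafunc}, that $\lnot\EDA_F$ holds. Let $R$ be the set of accepting runs on $w$. Each $\rho \in R$ eventually stays in a single SCC and visits some $F$-state there infinitely often, so pigeonhole over the finitely many (SCC, $F$-state) pairs gives an accepting SCC $C$ and some $q \in F \cap C$ visited infinitely often by $\aleph_0$ runs of $R$. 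By \Cref{lem:noeda_boundamb}, applied to the suffix $\Suf_{t+1}(w)$, the number of runs in $R$ located at $q$ at time $t$ is finite for each $t$ (at most the finitely many $Q_0$-to-$q$ prefixes on $\Pref_t(w)$ times the $\leq |Q|$ infinite continuations from $q$ visiting $q$ infinitely often). A double pigeonhole — first on states at each time, then on time — therefore yields a state $p \neq q$ and times $t_1 < t_2 < \cdots$ at which $\aleph_0$ runs of $R$ sit at $p$.

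The main obstacle, which I expect to be the delicate step, is to synthesize from this data a single factor $v := w[t_k+1,t_l]$ with $k<l$ that witnesses all three paths $p \overset{v}{\to} p$, $p \overset{v}{\to} q$, $q \overset{v}{\to} q$ simultaneously. My plan is to combine the pigeonhole data above with the fact that the set of transition profiles $\{(s,t) : s \overset{u}{\to} t\}$ over all $u \in \Sigma^*$ is finite, so some profile $M$ recurs infinitely often among the candidate factors $w[t_k+1,t_l]$. The $\aleph_0$ runs at $(p,t_k)$, together with pigeonhole on their landing states at $t_l$, force some entry $(p,r) \in M$; by choosing the indices carefully (and possibly passing to a subsequence of the $t_k$) both $r=p$ (yielding $\pi_1$) and $r=q$ (yielding $\pi_2$) can be arranged inside the same profile $M$. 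The entry $(q,q) \in M$ is then obtained by exhibiting a run from $R$ that revisits $q$ at both $t_k$ and $t_l$, exploiting that $q$ is visited infinitely often by infinitely many runs and that the set of $q$-visit times is dense enough to intersect $\{t_k\}$ for a suitable subsequence. Aligning all three entries under one common profile via a single staged pigeonhole over pairs $(k,l)$ is the heart of the argument.
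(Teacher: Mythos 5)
The forward direction is correct and essentially identical to the paper's proof. The converse, however, has a genuine gap precisely at the step you yourself flag as ``the heart of the argument'': forcing the three profile entries $(p,p)$, $(p,q)$, $(q,q)$ onto one common factor $w[t_k+1,t_l]$. Two problems. First, your times $t_k$ are chosen as $p$-recurrence times, while the $(q,q)$ entry requires a run that is \emph{at} $q$ at two of those same times; the $q$-visit times of the relevant runs and the set $\{t_k\}$ are a priori unrelated (a run may visit $q$ infinitely often while avoiding every $t_k$), so the claim that ``the set of $q$-visit times is dense enough to intersect $\{t_k\}$'' is unjustified and fails in general. The same misalignment affects the $(p,q)$ connector. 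Second, even the $(p,p)$ entry needs more than ``infinitely many runs sit at $p$ at each $t_k$'': those infinite sets can be pairwise disjoint, so no single run or path need be at $p$ at two sample times. You need the nested form of the pigeonhole --- König's lemma applied to the finitely many length-$t$ prefixes of the infinite run family --- which yields a single infinite trunk path $\nu$ such that, for every $t$, infinitely many of the runs agree with $\nu$ up to time $t$; and then \Cref{lem:noeda_boundamb} forces $\nu$ to avoid $q$ entirely. Your ``staged pigeonhole over pairs $(k,l)$'' gestures at this but does not supply the nesting, which is what actually produces a witness for $(p,p)$.

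The missing idea that dissolves the alignment problem is rotation of patterns: it suffices to exhibit an $\IDA$ pattern $(p,r,x,\pi_{1,2,3})$ in which the cycle $\pi_3$ merely \emph{visits} an accepting state somewhere, because splitting $\pi_3$ at that state and conjugating the label turns it into a genuine $\IDA_F$ pattern (the paper isolates this as \Cref{lem:shifting}). With that in hand one only has to synchronize $\nu$ (giving $p\to p$) with a single $q$-recurrent run $\alpha$ (giving $r\to r$ passing through $q$) at a recurring \emph{pair} of states $(p,r)$ --- a plain pigeonhole on $Q\times Q$, with no need to hit $q$ exactly at the sample times --- and then supply the connector $p\to r$ by a merging argument: since $\lnot\EDA_F$ bounds by $|Q|$ the number of runs that remain separated after first visiting $q$, the infinitely many runs branching off $\nu$ must keep merging into at most $|Q|$ simultaneously distinct classes, so some late-branching run is still on $\nu$ (hence at $p$) at time $t_k$ and has joined $\alpha$ (hence at $r$) by time $t_l$. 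Your profile-pigeonhole sketch produces neither the rotation step nor this connector; I recommend restructuring the converse around the trunk $\nu$, the merging of branched-off runs, and the rotation lemma rather than around exact profile entries at $q$.
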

\begin{proof}[sketch]
  For one direction,
  let $(p,q,v,\pi_{1,2,3})$ be an $\IDA$ pattern satisfying $\IDA_F$.
  Let $u\in \Sigma^*,\pi_0\in P(Q_0,u,p)$. Clearly, $uv^\omega\in L(\mc{A}_\NBA)$.
  Observe that for each $i\in \Nat$, $\pi_1$ can be taken $i$ times
  before using $\pi_2$ and then taking path $\pi_3$ forever. Hence,
  $\da(\mc{A}_\NBA) \geq \da(\mc{A}_\NBA, uv^\omega) \geq \aleph_0$.

  For the other direction, let $w\in L(\mc{A}_\NBA)$ and assume $\IDA_F$ does not hold. We show that the number of runs on $w$ is finite, and thus the degree of ambiguity of $\mc{A}_\NBA$ is at most limit-countable. Since $\IDA_F$ does not hold,
  $\EDA_F$ does not hold by \Cref{lem:edaf_implies_idaf}. Then by
  \Cref{lem:noeda_boundamb} a run can separate into at most $|Q|$ different accepting runs
  that visit $q\in F$ infinitely often, after seeing $q$ the first time.
  Assume that there is some state $q\in F$ such that there are infinitely many accepting
  runs that visit $q$ infinitely often. Then there must be such runs for which the first visit to $q$ happens arbitrarily late. From that observation one can construct an $\IDA_F$ pattern, a contradiction.
  But then the number of accepting runs on $w$ must be finite.
  \qed
\end{proof}

The relationship of the different state patterns in $\mc{A}$ and the corresponding
ambiguity classes of the Büchi automaton are summed up in the following theorem:

\begin{theorem}
  \label{thm:newambclasses}
  Let $\mc{A}$ be a trim $\NBA$.
  \ \\[-5mm]
  \begin{enumerate}
    \item
      $\mc{A}$ satisfies $\EDA_F$ $\Leftrightarrow \mc{A}_\NBA$ is uncountably ambiguous.
    \item
      $\mc{A}$ satisfies $\lnot\EDA_F$ and $\IDA_F$ $\Leftrightarrow \mc{A}_\NBA$ is strict-countably ambiguous.
    \item
      $\mc{A}$ satisfies $\lnot\IDA_F$ and $\EDA$ $\Leftrightarrow \mc{A}_\NBA$ is exponentially ambiguous.
    \item
      $\mc{A}$ satisfies $\lnot\IDA_F$,$\lnot\EDA$ and $\IDA$ $\Leftrightarrow \mc{A}_\NBA$ is polynomially ambiguous.
    \item
      $\mc{A}$ satisfies $\lnot\IDA_F$ and $\IDA$ $\Leftrightarrow \mc{A}_\NBA$ is limit-countably ambiguous.
    \item
      $\mc{A}$ satisfies $\lnot\IDA$ $\Leftrightarrow \mc{A}_\NBA$ is finitely ambiguous.
  \end{enumerate}
\end{theorem}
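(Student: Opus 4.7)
The plan is to assemble cases (1)--(6) from the lemmas already proved together with the $\NFA$ classification \Cref{thm:ambclasses}, using trimness to transfer information between the $\NFA$ and $\NBA$ readings of $\mc{A}$ via \Cref{lem:nfanba}. Cases (1) and (6) are essentially direct: (1) is the equivalence already asserted in \Cref{lem:edafunc}, and for (6) I apply \Cref{thm:ambclasses}(4) to the $\NFA$ reading of $\mc{A}$ to obtain $\lnot\IDA \Leftrightarrow \mc{A}_\NFA$ finitely ambiguous, then transfer finite ambiguity to $\mc{A}_\NBA$ via \Cref{lem:nfanba}. Case (2) intersects the two $F$-pattern lemmas: $\lnot\EDA_F$ forces \emph{at most} strict-countable ambiguity by \Cref{lem:edafunc}, while $\IDA_F$ forces \emph{at least} strict-countable ambiguity by \Cref{lem:idafc}, together pinning down exactly the strict-countable class; the converse reads the same two lemmas right-to-left.

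The substantive work is cases (3) and (4), which refine the limit-countable regime into a polynomial and an exponential subcase. For both I first observe that $\lnot\IDA_F$ together with $\IDA$ already places $\mc{A}_\NBA$ in the limit-countable class: $\lnot\IDA_F$ implies $\lnot\EDA_F$ by \Cref{lem:edaf_implies_idaf}, and combined with \Cref{lem:edafunc} and \Cref{lem:idafc} this rules out the uncountable and strict-countable classes, while $\IDA$ together with \Cref{thm:ambclasses}(4) and \Cref{lem:nfanba} rules out finite ambiguity. What remains is to distinguish polynomial from exponential growth of the accepting prefix-path count.

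For the polynomial half of (4), I apply \Cref{thm:ambclasses}(3) to the $\NFA$ obtained from $\mc{A}$ by declaring every state accepting; since $\EDA$ and $\IDA$ are defined without any reference to $F$, the hypotheses $\lnot\EDA$ and $\IDA$ carry over unchanged, yielding a uniform polynomial bound $c \cdot i^d$ on the total number of paths from $Q_0$ on any input prefix of length $i$, which dominates the accepting prefix-path count of the $\NBA$. For the exponential half of (3), I exploit an $\EDA$ pattern $(p,v,\pi_1,\pi_2)$ together with trimness: pick $u$ reaching $p$ from $Q_0$, and by trimness of $\mc{A}$ as $\NBA$ continue from $p$ along some $y$ into an accepting state lying on a cycle $\gamma$. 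Then for each $n$ the word $w_n := u v^n y \gamma^\omega$ has at least $2^n$ separated acceptable prefix paths at time $i_n := |u| + n|v|$, because the two cycles on $v$ can be composed in $2^n$ ways before committing to the fixed accepting suffix, and $2^n$ outgrows every polynomial in $i_n$. The main obstacle is making sure that all $2^n$ combinatorial choices do extend to accepting infinite runs on a common suffix; this is exactly what the freedom to plug in the trim-provided continuation $y\gamma^\omega$ gives us.

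Finally, (5) is a corollary of (3) and (4): $\lnot\IDA_F \wedge \IDA$ splits according to $\EDA$ versus $\lnot\EDA$ into the hypotheses of (3) and (4), both of which yield limit-countable ambiguity. All reverse implications then follow for free: the pattern conditions in (1)--(4) and (6) partition the $\NBA$ landscape (with (5) being the union of (3) and (4)), while the corresponding ambiguity classes are pairwise disjoint by definition, so the forward implications propagate back to equivalences.
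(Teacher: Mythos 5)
Your proposal is correct. For items (1), (2), (5) and (6) it coincides with the paper's own proof (assembling \Cref{lem:edafunc}, \Cref{lem:idafc}, \Cref{lem:edaf_implies_idaf}, \Cref{thm:ambclasses} and \Cref{lem:nfanba}), but for items (3) and (4) you take a genuinely different route. The paper proves \emph{both} directions of (3)/(4) by a two-sided asymptotic transfer between the ambiguity of $\mc{A}_\NFA$ on words of length $n$ and the prefix-ambiguity of $\mc{A}_\NBA$ at time $n$: finite witness words are extended by a common accepting loop (trimness plus \Cref{lem:noeda_boundamb}) to infinite words, and conversely separated accepting runs on a prefix are truncated and routed into a common final state, each direction costing only a constant factor, after which \Cref{thm:ambclasses} is invoked both ways. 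You instead prove only the forward implications --- the polynomial upper bound by applying \Cref{thm:ambclasses} to the all-states-final $\NFA$ (legitimate, since $\IDA$ and $\EDA$ make no reference to $F$ and the total path count dominates the quantity in the definition of limit-countable ambiguity), and the exponential lower bound by interleaving the two cycles of an $\EDA$ pattern in $2^n$ pairwise distinct ways and appending one trim-provided accepting tail $y\gamma^\omega$ common to all of them --- and then recover every converse at once from the observation that the pattern conditions of (1)--(4),(6) partition the trim $\NBA$ (using $\EDA_F\Rightarrow\IDA_F$, $\EDA\Rightarrow\IDA$ and $\IDA_F\Rightarrow\IDA$) while the five ambiguity classes are pairwise disjoint and exhaustive. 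This buys a cleaner argument that avoids the paper's somewhat delicate backward truncation step; the paper's two-sided transfer buys, in exchange, the quantitative statement that the $\NFA$ and $\NBA$ growth rates agree up to constant factors, which your partition argument does not yield. The only point worth spelling out is the exhaustiveness of the five ambiguity classes (every word has finitely many, $\aleph_0$, or $2^{\aleph_0}$ accepting runs), which the paper asserts and which follows from \Cref{lem:edafunc}; your disjointness argument silently relies on it.
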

\begin{proof}
  \ \\[-5mm]
  \begin{enumerate}
    \item[(1+2):] By \Cref{lem:edafunc} and \Cref{lem:edafunc,lem:idafc}, respectively.
    \item[(3+4):] We show both directions by establishing that the difference in growth of
      ambiguity of $\mc{A}_\NFA$ in the length of finite words and the growth of
      ambiguity of $\mc{A}_\NBA$ in the length of finite prefixes is bounded by constants.

      $(\Rightarrow):$
      $\lnot\IDA_F$ implies $\lnot\EDA_F$ by \Cref{lem:edaf_implies_idaf} and by
      \Cref{lem:idafc} $\mc{A}_\NBA$ is not strict-countably ambiguous, hence no word has
      infinitely many runs. But by \Cref{thm:ambclasses} \cite{weber1991degree} there is a
      family of finite words with increasing length that w.l.o.g. terminate in the same
      $q\in F$ and witness the polynomial (exponential) ambiguity of $\mc{A}_\NFA$. As
      $\mc{A}$ is trim, for each such word $u$ with $k$ accepting runs, those runs can be
      extended by the same loop from $q$ to $q$ labelled with some $v$, hence $uv^\omega$
      has at least $k$ (and by \Cref{lem:noeda_boundamb} at most $k|Q|$) accepting runs in
      $\mc{A}$ as $\NBA$. Hence $\mc{A}_\NBA$ has asymptotically the same limit-countable
      ambiguity.

      $(\Leftarrow):$
      Pick some $w\in L(\mc{A}_\NBA)$ with $\da(\mc{A}_\NBA,w)=k$
      and pick $i$ such that after reading the prefix of length $i$ all accepting runs
      have separated. Notice that there must be at least $\lfloor \frac{k}{|Q|} \rfloor$
      different runs that are in the same state $p$. As $\mc{A}$ is trim, there is some
      finite word $x$ with $|x|\leq |Q|$ leading from $p$ to some $q\in F$. Let $a
      \leq |Q|^{|Q|}$ be the maximum number of different paths of length $|Q|$ in $\mc{A}$
      that have the same source, target and label and let $\hat{w} = \Pref_i(w)x$. Then on
      any $x$ each run can separate into at most $a$ different runs and therefore we
      have that $\lfloor \frac{k}{|Q|} \rfloor \leq \da(\mc{A}_\NFA,\hat{w}) \leq ak$.
      Hence, by picking for each $i$ an infinite word $w$ with the largest
      number of  separated accepting runs after reading $\Pref_i(w)$, we can construct a
      family of finite words with the number of accepting runs growing asymptotically in
      the same way as the maximum number of separated accepting runs grows on prefixes of
      infinite words. By \Cref{thm:ambclasses} \cite{weber1991degree} this implies that
      $\mc{A}$ must satisfy $\IDA$ or $\EDA$, respectively.


    \item[(5):] By $(3),(4)$ and \Cref{thm:ambclasses},
      as $\EDA$ implies $\IDA$.
    \item[(6):] Shown in \Cref{thm:ambclasses} for trim $\NFA$.
      By \Cref{lem:nfanba} this extends to trim $\NBA$.
  \qed
  \end{enumerate}
\end{proof}

The ambiguity class of an $\NBA$ from the hierarchy in \Cref{thm:newambclasses}
can also be determined efficiently:

\begin{theorem}
  \ \\[-5mm]
  \begin{enumerate}
    \item Uncountable ambiguity of an $\NBA$ $\mc{A}$ can be decided in $\mc{O}(|\mc{A}|^2)$.
    \item The ambiguity class of an $\NBA$ $\mc{A}$ can be computed in
      $\mc{O}(|\mc{A}|^3)$.
    \item $\dpa(\mc{A}_\NBA)$ can be computed in $\mc{O}(|\mc{A}|^3)$.
  \end{enumerate}
\end{theorem}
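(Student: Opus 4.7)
The approach is to encode each pattern from \Cref{def:idaeda,def:idafedaf} as a reachability or SCC problem in a product automaton of low dimension, adapting the strategy developed for $\NFA$ in \cite{weber1991degree,allauzen2008general} and additionally tracking whether the witnessing state belongs to $F$.

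For (1), by \Cref{thm:newambclasses}(1) it suffices to decide $\EDA_F$. I compute the SCCs of the product $\mc{A}\times\mc{A}$, whose underlying graph has $\mc{O}(|\Delta|^2)=\mc{O}(|\mc{A}|^2)$ edges. By the standard product characterization of $\EDA$, the automaton satisfies $\EDA$ iff some diagonal state $(p,p)$ shares an SCC with an off-diagonal state $(r,s)$ with $r\neq s$, and it satisfies $\EDA_F$ iff additionally $p\in F$. SCC decomposition is linear in the graph size, so this yields an $\mc{O}(|\mc{A}|^2)$ procedure.

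For (2), by \Cref{thm:newambclasses} the class of $\mc{A}_\NBA$ is determined by the four Boolean flags $\EDA,\EDA_F,\IDA,\IDA_F$. The two $\EDA$-tests are handled as in (1). For the $\IDA$-tests I work in the triple product $\mc{A}\times\mc{A}\times\mc{A}$, whose graph has $\mc{O}(|\Delta|^3)$ edges: a pair $p\neq q$ (respectively, with $q\in F$) witnesses $\IDA$ (resp.\ $\IDA_F$) iff $(p,p,q)$ can reach $(p,q,q)$ in the triple product, since a joint path labelled $v$ on such a run is exactly a triple $(\pi_1,\pi_2,\pi_3)\in P(p,v,p)\times P(p,v,q)\times P(q,v,q)$. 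A single reachability pass on the triple product detects all witnessing pairs in $\mc{O}(|\mc{A}|^3)$, which dominates the overall cost.

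For (3), the proof of \Cref{thm:newambclasses}(3+4) establishes that the maximum number of separated accepting runs over prefixes of an $\NBA$-word and the number of accepting runs of a suitable $\NFA$ view of $\mc{A}$ on a finite word differ only by multiplicative constants depending on $|Q|$. Since such constants do not alter the integer exponent of a polynomial bound, $\dpa(\mc{A}_\NBA)$ coincides with the $\NFA$-style polynomial degree of the trim restriction of $\mc{A}$ in which a state is marked accepting iff an accepting cycle is reachable from it. By the Weber--Seidl characterization used in \cite{allauzen2008general}, this degree equals the length of the longest chain $p_0,\ldots,p_k$ admitting a common word $v$ that labels a loop at every $p_i$ and a transition $p_i\xrightarrow{v}p_{i+1}$; such a chain is computable by extracting the $\IDA$-related pairs via $\mc{A}\times\mc{A}$, building a DAG on the resulting SCCs and taking its longest path, all within $\mc{O}(|\mc{A}|^3)$. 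The main obstacle is the exactness step here: the reductions in \Cref{thm:newambclasses}(3+4) are stated asymptotically, so I must verify that the constant-factor slack introduced by trimming and by the at-most-$|Q|^{|Q|}$ multiplicity of paths does not shift the integer exponent, which reduces to the observation that the smallest $k$ with $f(i)\leq c\cdot i^k$ is invariant under multiplying $f$ by a constant.
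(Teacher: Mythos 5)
Your approach is essentially the paper's: the proof in the paper simply defers to the pattern-detection algorithms for $\NFA$ from the cited work (SCCs of $\mc{A}\times\mc{A}$ for $\EDA$, reachability in $\mc{A}\times\mc{A}\times\mc{A}$ for $\IDA$, and the chain/longest-path computation for the polynomial degree), adding the requirement that the distinguished state of the pattern be accepting. Your parts (1) and (3) spell this out correctly, and your justification that the constant-factor (and constant length-shift) slack from \Cref{thm:newambclasses}(3+4) cannot change the integer exponent is exactly the point the paper leaves implicit. (Minor: all products should be restricted to the useful part of $\mc{A}$, matching the trimness hypothesis of \Cref{thm:newambclasses}; otherwise unreachable patterns give false positives.)

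The one step that does not work as written is the $\IDA$/$\IDA_F$ test in part (2). A ``single reachability pass'' on the cube from the set of all sources $(p,p,q)$ only computes which states are reachable from \emph{some} source, so finding $(p,q,q)$ reachable does not certify that it is reachable from the matching source $(p,p,q)$ --- e.g.\ a path $(p',p',q')\rightarrow^{*}(p,q,q)$ yields paths $p'\rightarrow p$, $p'\rightarrow q$ and $q'\rightarrow q$ over a common label, which is not an $\IDA$ pattern. Running a separate search per pair would cost $\mc{O}(|Q|^2\cdot|\mc{A}|^3)$. The standard repair (and what the cited algorithm actually does) is to add an $\varepsilon$-edge from $(p,q,q)$ to $(p,p,q)$ for every pair $p\neq q$ (requiring $q\in F$ for $\IDA_F$) and test whether some such edge lies on a cycle, i.e.\ whether its endpoints share an SCC of the augmented graph; any such cycle, even one traversing several $\varepsilon$-edges, composes into a genuine $\IDA$ pattern, and the whole test is a single SCC decomposition in $\mc{O}(|\mc{A}|^3)$. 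With that substitution your argument goes through.
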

\begin{proof}
  A straightforward modification of the corresponding algorithms for $\NFA$ from
  \cite{allauzen2008general}, which use a depth-first search in products of $\mc{A}$ with
  itself. $\EDA_F$ and $\IDA_F$ can easily be checked by adding the corresponding
  restriction to the $\IDA$ / $\EDA$ pattern, i.e., that a specific state must
  additionally be accepting. In case of polynomial ambiguity, the algorithm to calculate
  $\dpa(\mc{A}_\NBA)$ can be used without changes after excluding $\EDA$ and $\IDA_F$.
  \qed
\end{proof}


Computing the exact degree of finite ambiguity, similar to the case for $\NFA$ \cite{chan1988finite}, is a difficult
problem:
\begin{restatable}{theorem}{thmambpspace}
  \label{thm:ambpspace}
    Deciding whether $\da(\mc{A}_\NBA)>d$ for a given automaton $\mc{A}$ and $d\in\Nat$ is a $\op{PSPACE}$-complete problem.
\end{restatable}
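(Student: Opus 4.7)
The plan is to establish both the upper and lower PSPACE bounds by lifting the corresponding $\NFA$ argument from \cite{chan1988finite} to the Büchi setting.

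For the upper bound, I would build a product automaton $\mc{B}$ that simulates $d{+}1$ parallel runs of $\mc{A}$ on the same input word. A state of $\mc{B}$ is a tuple $(q_1,\ldots,q_{d+1}) \in Q^{d+1}$ together with two pieces of bookkeeping: (i)~for every pair $i<j$, a single bit $s_{ij}$ recording whether the $i$-th and $j$-th simulated runs have already diverged at some earlier position (once set, this bit persists), and (ii)~a cyclic index into $\{1,\ldots,d+1\}$ implementing a generalised Büchi condition that forces each of the $d{+}1$ components to visit $F$ infinitely often. Then $\mc{B}$ has a nonempty language exactly when there exist $d{+}1$ pairwise distinct accepting Büchi runs of $\mc{A}$ over a common input, i.e.\ when $\da(\mc{A}_\NBA) > d$. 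A single state of $\mc{B}$ is encoded in $(d{+}1)\log|Q| + \binom{d+1}{2} + O(\log d)$ bits, which is polynomial in $|\mc{A}|$ whenever $d$ is polynomially bounded (and in particular when $d$ is given in unary). Nonemptiness of $\mc{B}$ is then decided on the fly by the standard lasso search: guess a state $s$ in which all separation bits are set, guess step-by-step a path from an initial state to $s$, and then guess a loop from $s$ back to $s$ triggering every Büchi component. Both reachability tests are performed nondeterministically in polynomial space, giving the $\op{PSPACE}$ bound via Savitch. For $d$ given in binary, the state tuple no longer fits in polynomial space, and I would instead track a vector of counters $(n_q)_{q\in Q}$ with $n_q \in \{0,\ldots,d{+}1\}$ counting partial accepting runs ending in $q$ (saturated at $d{+}1$), combined with a Miyano--Hayashi-style breakpoint to witness that the counted runs have accepting continuations; this is precisely the Büchi counterpart of the Chan--Ibarra counting construction and keeps the state size at $|Q|\log(d{+}1) + O(|Q|)$.

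For the lower bound, I would reduce from the $\NFA$ version of the problem, which is $\op{PSPACE}$-complete by \cite{chan1988finite}. Given an $\NFA$ $\mc{A}_0 = (Q_0,\Sigma_0,\Delta_0,I_0,F_0)$ and $d \in \Nat$, introduce a fresh symbol $\$ \notin \Sigma_0$ and a fresh state $q_f$, and define $\mc{A}$ over $\Sigma_0 \cup \{\$\}$ with states $Q_0 \cup \{q_f\}$, initial states $I_0$, Büchi set $\{q_f\}$, and transitions $\Delta_0 \cup \{(f,\$,q_f) \mid f \in F_0\} \cup \{(q_f,\$,q_f)\}$. Any accepting Büchi run must eventually reach $q_f$ and stay there, which forces the input to have the form $u\$^\omega$ with $u \in L(\mc{A}_{0,\NFA})$; on such a word the accepting runs of $\mc{A}_\NBA$ are in bijection with the accepting $\NFA$-runs of $\mc{A}_0$ on $u$. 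Consequently $\da(\mc{A}_\NBA) = \da(\mc{A}_{0,\NFA})$, so the reduction is polynomial and preserves the threshold~$d$.

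The main obstacle will be the binary-$d$ case of the upper bound: the naive product automaton becomes too large, so one has to combine the saturating counters with a Büchi-style breakpoint condition and argue that the resulting small data structure still faithfully detects the existence of an input with more than $d$ accepting runs. The rest of the argument (the lasso search and the reduction) is essentially routine once the state space has polynomial encoding.
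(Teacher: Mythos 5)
Your lower bound is essentially identical to the paper's reduction (fresh symbol, fresh accepting sink reachable only from the $\NFA$'s final states), and it is correct. Your unary-$d$ upper bound (product of $d{+}1$ copies with pairwise separation bits and a generalised Büchi condition, decided by a lasso search) is also correct as far as it goes. But the theorem takes $d\in\Nat$ as part of the input, and a finitely ambiguous $\NBA$ can have degree of ambiguity exponential in $|Q|$, so the relevant case is exactly the one you defer to the end: $d$ given in binary. There your proposal has a genuine gap. The paper's algorithm is not a product construction at all; it is the Chan--Ibarra matrix-counting algorithm, and making it work for Büchi acceptance rests on two lemmas you do not have. First, one must show that the supremum is attained (up to $\geq$) on ultimately periodic words $xy^\omega$ (the paper's Lemma~\ref{lem:ultper}); without this, an algorithm that guesses the input letter by letter has no finite certificate that the states carrying nonzero counters actually possess \emph{accepting infinite} continuations. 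Your sketch never restricts to lassos in the counting regime, and a ``Miyano--Hayashi-style breakpoint'' does not fill this hole: breakpoints certify a universal condition (``all tracked runs visit $F$ infinitely often'') over a subset evolving by full transition images, whereas here one needs, for each state $q$ with $n_q>0$ at the checkpoint, an \emph{existential} witness of one accepting continuation on the common suffix, certified simultaneously for up to $|Q|$ states in polynomial space.

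Second, one must relate the count of finite run prefixes to the count of accepting infinite runs. The paper does this by choosing the checkpoint after all accepting runs have separated and reached their terminal SCCs, and then invoking Lemma~\ref{lem:uniqpaths}: in a finitely ambiguous $\NBA$, within an SCC there is at most one path between any two states with a given label, so the periodic continuations are unambiguous and can be recognised by $0/1$-valued cycle matrices ($T_z(i,i)=1$ together with $A_z(i,i)>1$). This is what makes $\sum_{q} T_x(1,q)$ over the certified states equal to $\da(\mc{A}_\NBA, xy^\omega)$ rather than merely a bound of unclear direction. Your counters-plus-breakpoint sketch asserts that it is ``precisely the Büchi counterpart of the Chan--Ibarra counting construction'' without supplying either of these ingredients, and as stated it cannot be checked for soundness or completeness. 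To repair it, add the ultimately-periodic-word lemma, guess the word as a lasso $x\,(y^j)^\omega$, and use the unique-path property of SCCs of finitely ambiguous $\NBA$ to certify the accepting cycles; that is exactly the paper's route.
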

\begin{proof}[sketch]
  We adapt the algorithm presented in \cite{chan1988finite} to show
  $\op{PSPACE}$-completeness of this problem from $\NFA$ to $\NBA$. The nondeterministic
  algorithm for $\NFA$ guesses a word with at least $d+1$ accepting runs and evaluates the
  product of the transition matrices for each symbol along the word (which yields the
  number of different runs), while bounding the growth of the numbers. First we show that
  we can restrict ourselves to ultimately periodic words of the form $uv^\omega$ with
  $u,v\in \Sigma^*$ and then we argue that it suffices to guess $u$ and $v$ accordingly.
  We show completeness by a simple reduction from the $\NFA$ to the $\NBA$ problem that
  maps each finite word $w$ accepted by the $\NFA$ one-to-one to a word $w\#^\omega$ accepted
  by the $\NBA$ with the same number of accepting runs. \qed
\end{proof}

\section{Translation from NBA to finitely ambiguous NBA}
\label{sec:translation}
\newcommand{\A}{\mathcal{A}}
\newcommand{\rst}{T_{\mathrm{rs}}^{\A,w}}
\newcommand{\st}{T_{\mathrm{s}}^{\A,w}}
We present a construction that converts a given $\NBA$ with $n$ states into a finitely
ambiguous $\NBA$ with degree of ambiguity at most $n$. For explaining the intuition of the
construction and for proving its correctness, we first introduce in
\Cref{subsec:split-trees} the notion of the reduced split tree as defined in \cite{kahler2008complementation}. This tree, defined for an $\NBA$ $\A$ and an infinite word $w$, collects runs of $\A$ on $w$ in a specific way.

The translation of an $\NBA$ into a finitely ambiguous $\NBA$ is presented in \Cref{subsec:faba-construction}. The construction uses two disjoint sets for tracking runs of the given $\NBA$, and its description does not rely on the notion of reduced split tree. However, for understanding the role of the two subsets in the construction, reduced split trees are a valuable tool.

\subsection{Reduced Split Trees} \label{subsec:split-trees}
%
%
An \emph{$X$-labelled binary tree} is a partial function  $T: \{0,1\}^* \rightarrow X$ such that the domain $N_T$ of $T$ is a non-empty prefix-closed set. The elements of $N_T$ are called the nodes of $T$. The root node is  $\varepsilon$ (since $N_T$ is prefix closed, $\varepsilon \in N_T$). For a node $u \in \{0,1\}^*$, the node $u0$ is the \emph{left child} of $u$, and $u1$ is the \emph{right child} of $u$. For two nodes $u,v \in \{0,1\}^*$ we say that $u$ and $v$ are on the same level if $|u|=|v|$, and we further say that $u$ is to the left of $v$ if $u$ is lexicographically smaller than $v$.

An \emph{infinite path} through such a tree corresponds to an infinite sequence $\pi \in \{0,1\}^\omega$ (the nodes on the path are the finite prefixes of $\pi$). We say that $\pi$ is \emph{left-recurring} if it contains infinitely many $0$ (which means that it moves to the left successor infinitely often).




Let $\A=(Q,\Sigma,\Delta,Q_0,F)$ be an $\NBA$ and $w \in \Sigma^\omega$ be an infinite word. The following definitions are illustrated by simple examples in \Cref{fig:trees}(a)--(c).

The \emph{split tree} $\st$  is a $2^Q$-labelled binary tree with node set $N_T = \{0,1\}^*$ defined as follows: The root is labelled
with the set of initial states $\st(\varepsilon) := Q_0$. If $u\in \{0,1\}^*$ with $|u|=i$ is labelled with $\st(u)
= P\subseteq Q$, then $\st(u0) := \Delta_F(P,w(i))$ and
$\st(u1) := \Delta_{\overline{F}}(P,w(i))$.

Observe that $\st$ is an infinite complete binary tree that encodes all
runs of $\mathcal{A}$ on $w$. It groups the runs by their visits to the set $F$ by continuing runs that pass through an accepting state to the left, and the other ones to the right.
Note that $\st$ can have nodes with label $\emptyset$, and that a state can occur in many labels on each level.


The \emph{reduced left-right tree} $\rst$ is obtained from
$\st$ by keeping for each state only the leftmost occurrence on each level (i.e., if $u$ is to the left of $v$, and $q$ occurs in the labels of $u$ and $v$, then it is removed from the label of $v$) and then removing vertices
labelled with $\emptyset$. This results in a tree because if the label of a node becomes empty by the above operation, then also the labels of its successors become empty.

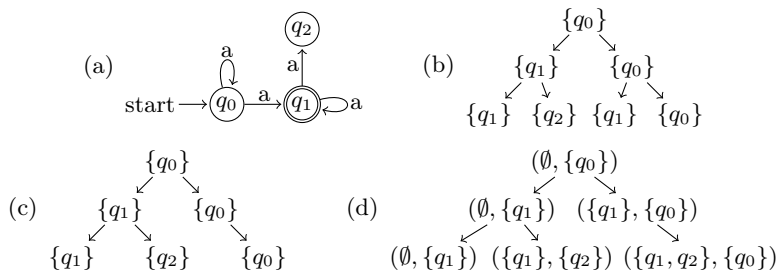
\begin{figure}[t]
  \begin{center}
    (a)
\begin{tikzpicture}[baseline={([yshift=-.5ex]current bounding box.center)},shorten >=1pt,
  node distance=1cm,inner sep=1pt,on grid,auto]
   \node[state,initial]   (q0)               {$q_0$};
   \node[state,accepting] (q1) [right=of q0] {$q_1$};
   \node[state]           (q2) [above=of q1] {$q_2$};
    \path[->]
    (q0) edge [loop above] node {a} (q0)
    (q1) edge [loop right] node {a} (q1)
    (q0) edge node {a} (q1)
    (q1) edge node {a} (q2)
    ;
\end{tikzpicture}
    \quad\quad (b)
\begin{tikzpicture}[baseline={([yshift=-.5ex]current bounding box.center)},shorten >=1pt,
  node distance=0.9cm,on grid,auto,inner sep=1pt]
  \node   (q)               {$\{q_0\}$};
  \node [below left=of q]  (q0)               {$\{q_1\}$};
  \node [below right=of q]  (q1)               {$\{q_0\}$};
  \node [below left=of q0]  (q00)               {$\{q_1\}$};
  \node [below right=of q0,xshift=-4mm]  (q01)               {$\{q_2\}$};
  \node [below left=of q1,xshift=4mm]  (q10)               {$\{q_1\}$};
  \node [below right=of q1]  (q11)               {$\{q_0\}$};
    \path[->]
    (q) edge node {} (q0)
    (q) edge node {} (q1)
    (q0) edge node {} (q00)
    (q0) edge node {} (q01)
    (q1) edge node {} (q10)
    (q1) edge node {} (q11)
    ;
\end{tikzpicture}

    \vspace{2mm}
    (c)
\begin{tikzpicture}[baseline={([yshift=-.5ex]current bounding box.center)},shorten >=1pt,
  node distance=0.9cm,on grid,auto,inner sep=1pt]
  \node   (q)               {$\{q_0\}$};
  \node [below left=of q]  (q0)               {$\{q_1\}$};
  \node [below right=of q]  (q1)               {$\{q_0\}$};
  \node [below left=of q0]  (q00)               {$\{q_1\}$};
  \node [below right=of q0]  (q01)               {$\{q_2\}$};
  \node [below right=of q1]  (q11)               {$\{q_0\}$};
    \path[->]
    (q) edge node {} (q0)
    (q) edge node {} (q1)
    (q0) edge node {} (q00)
    (q0) edge node {} (q01)
    (q1) edge node {} (q11)
    ;
\end{tikzpicture}
    \quad\quad (d)
\begin{tikzpicture}[baseline={([yshift=-.5ex]current bounding box.center)},shorten >=1pt,
  node distance=0.9cm,on grid,auto,inner sep=1pt]
  \node   (q)               {$(\emptyset,\{q_0\})$};
  \node [below left=of q,xshift=-2mm]  (q0)               {$(\emptyset,\{q_1\})$};
  \node [below right=of q,xshift=2mm]  (q1)               {$(\{q_1\},\{q_0\})$};
  \node [below left=of q0,xshift=-4mm]  (q00)               {$(\emptyset,\{q_1\})$};
  \node [below right=of q0,xshift=-1mm]  (q01)               {$(\{q_1\},\{q_2\})$};
  \node [below right=of q1,xshift=2mm]  (q11)               {$(\{q_1,q_2\},\{q_0\})$};
    \path[->]
    (q) edge node {} (q0)
    (q) edge node {} (q1)
    (q0) edge node {} (q00)
    (q0) edge node {} (q01)
    (q1) edge node {} (q11)
    ;
\end{tikzpicture}
  \end{center}

  \caption{ (a) An $\NBA$ $\mc{A}$
  (b) First levels of $\st$ for $w= a^\omega$
  (c) First levels of $\rst$ for $w= a^\omega$
  (d) First two steps of possible transitions of $\A'$
  }
  \label{fig:trees}
\end{figure}

The interesting properties of $\rst$ are summarized in the following lemma, which is shown in \cite{kahler2008complementation}.

\begin{lemma}[\cite{kahler2008complementation}]
  \label{lem:reduced-split-tree}
  \begin{enumerate}
  \item
    The word $w$ is accepted by $\A$ if, and only if, $\rst$ contains an infinite left-recurring path.
  \item $\rst$ contains at most $|Q|$ infinite paths.
  \end{enumerate}

\end{lemma}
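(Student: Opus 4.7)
My plan is to establish both parts via König's lemma applied to suitable subtrees of $\rst$, exploiting the structural properties of the split-tree construction.

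For Part 1 ($\Leftarrow$), given an infinite left-recurring path $\pi$, I would construct an accepting run by König's lemma on the tree of partial runs of $\A$ that stay inside the labels of $\pi$: the root is $\rst(\pi_0) = Q_0$, and a vertex $(i,q)$ has children $\{(i+1,q') : q' \in \rst(\pi_{i+1}),\ (q,w(i),q') \in \Delta\}$. The key structural fact is that $\rst(\pi_{i+1}) \subseteq \Delta(\rst(\pi_i), w(i))$: any state $q' \in \rst(\pi_{i+1})$ has a $w(i)$-predecessor $p \in \st(\pi_i)$, but if $p \notin \rst(\pi_i)$ then $p$ already appears leftward at level $i$ and hence $q'$ would appear leftward at level $i+1$, contradicting $q' \in \rst(\pi_{i+1})$. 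Combined with $\rst(\pi_{i+1}) \neq \emptyset$ (nodes of $\rst$ are non-empty by construction), this makes the tree infinite and finitely branching, so König's lemma yields an infinite branch, i.e., a run $\rho$ of $\A$ on $w$. Whenever $\pi$ takes a left step, $\rst(\pi_{i+1}) \subseteq F$ because left children in $\st$ are formed via $\Delta_F$; left-recurrence of $\pi$ then forces $\rho$ into $F$ infinitely often, so $\rho$ is accepting.

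For Part 1 ($\Rightarrow$), I would use a \emph{leftmost-accepting-run} construction. For each accepting run $\rho'$ let $\tau^{\rho'}_i$ be the unique $\rst$-node at level $i$ whose label contains $r'_i$, and set $\pi_i := \min\{\tau^{\rho'}_i : \rho' \text{ accepting}\}$, well-defined because each level of $\rst$ has finitely many nodes. First one checks that $\pi$ is a path in $\rst$: if the $\rst$-parent $u$ of $\pi_{i+1}$ were strictly left of $\pi_i$, then splicing a prefix reaching some state of $\rst(u)$ with the tail of an accepting run through $\pi_{i+1}$ yields an accepting run with $\tau$-value $u < \pi_i$ at level $i$, contradicting minimality; if $u$ were strictly right of $\pi_i$, then $\pi_{i+1} \geq u0 > \pi_i 1$, contradicting the bound $\pi_{i+1} \leq \pi_i 1$ that holds because some accepting run realising $\pi_i$ has its next $\rst$-location at or left of $\pi_i 1$. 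The main obstacle, and the step I expect to be hardest, is to show that $\pi$ is left-recurring; following \cite{kahler2008complementation}, assume toward contradiction $\pi_{i+1} = \pi_i 1$ for all $i \geq L$, let $T^{*}$ be the subtree of $\rst$-nodes containing a state with an accepting continuation, pick $q \in \rst(\pi_L)$ with accepting continuation, and follow the induced accepting run to its first $F$-visit at some level $j > L$: a case analysis on the $\rst$-location of that $F$-state relative to $\pi$'s trajectory produces either a $T^{*}$-path strictly left of $\pi$ (contradicting leftmostness) or directly $\pi_k 0 \in T^{*}$ for some $k \geq L$ (contradicting $\pi_{k+1} = \pi_k 1$).

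For Part 2, I would use a simple width bound. By the leftmost-occurrence rule, the labels of distinct nodes on the same level of $\rst$ are pairwise disjoint non-empty subsets of $Q$, so each level contains at most $|Q|$ nodes. Two distinct infinite paths diverge at some finite level and, since $\rst$ is a tree, remain on pairwise distinct nodes from then on. Hence any $k$ distinct infinite paths eventually occupy $k$ distinct nodes at some sufficiently deep level, forcing $k \leq |Q|$.
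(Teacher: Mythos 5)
Your proof is correct and follows essentially the same route as the paper's (very brief) argument: part 2 via the observation that labels on one level of $\rst$ are pairwise disjoint, the $\Leftarrow$ direction of part 1 via König's lemma using the fact that every state in $\rst(\pi_{i+1})$ has a predecessor in $\rst(\pi_i)$, and the $\Rightarrow$ direction via the leftmost trace of accepting runs, which is exactly the paper's ``run that visits the next accepting states as early as possible.'' The one step you flag as hardest does close as you sketch: the sandwich $\pi_i \leq \tau^{\rho}_i \leq \pi_i$ pins the minimizing run onto $\pi$ until its first $F$-visit, which must then land at a left child $\leq \pi_{j-1}0 < \pi_{j-1}1 = \pi_j$, contradicting minimality.
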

The second claim is obvious because each state of $\A$ is contained in at most one label of each level. The first claim is shown by picking a run of $\A$ on $w$ that always visits the next accepting states as early as possible (details can be found in \cite{kahler2008complementation}).

\subsection{Construction of a Finitely Ambiguous NBA} \label{subsec:faba-construction}
We construct from $\A$ a new $\NBA$ $\A'$ whose infinite runs on a word $w$ are in one-to-one correspondence with the infinite paths of $\rst$. The run moves into a final state if the corresponding path branches to the left. Then \Cref{lem:reduced-split-tree} implies that $\A'$ is equivalent to $\A$, and its degree of ambiguity is bounded by $|Q|$.

In order to implement this idea, $\A'$ guesses an infinite path through $\rst$ by constructing its label sequence. If the current label is $S$ (corresponding to some node $u$ in $\rst$), then the labels $S_0$ and $S_1$ of $u0$ and $u1$ in $\rst$ can be constructed based on the next input letter, the transition relation of $\A$, and the knowledge which states occur in the labels to the left of $u$ in $\rst$.
This latter information is tracked in a second set $P$. Thus, in $\A'$ a state $(P,S)$ for $P,S \subseteq Q$ is reachable by reading the first $i$ letters of an input word $w$ if $S$ is the label of some node $u$ on level $i$ in $\rst$, and $P$ is the union of the labels of nodes to the left of $u$ (on the same level).

Formally, we define the new automaton in the following way.
Given $\NBA\ \mc{A}$, let $\mathcal{A'} = (Q', \Sigma, Q_0', \Delta', F')$ be
defined by
  \begin{itemize}
    \item $Q' = \{(P,S) \in 2^Q \times 2^Q \mid P\cap S = \emptyset \mbox{ and }S \not=\emptyset\}$
    \item $Q_0' = (\emptyset, Q_0)$
    \item $\Delta' = \Delta_0' \cup \Delta_1'$ with
      \begin{itemize}
        \item $\Delta_0' = \{((P,S), a, (P',S')) \mid P' = \Delta(P,a),
                S'=\Delta_F(S,a)\setminus P' \}$,
        \item $\Delta_1' = \{((P,S), a, (P',S')) \mid P' = \Delta(P,a) \cup \Delta_F(S,a),
                S' =\Delta_{\overline{F}}(S,a)\setminus P' \}$
      \end{itemize}
  Note that the transitions are only defined if $S' \not= \emptyset$ by definition of $Q'$.
    \item $F' = \{(P,S) \mid S \subseteq F \}$
  \end{itemize}


\Cref{fig:trees}(d) illustrates the possible transitions of $\A'$.

The following lemma is a direct consequence of the definition of $\rst$ and the construction of $\A'$ (by an induction on $i$).
\begin{lemma} \label{lem:automaton-rst}
  Let $x$ be the finite prefix of length $i$ of $w$, and assume that $\A'$ can reach the state $(P,S)$ by reading $x$ from its initial state. Then there is a node $u$ on level $i$ of $\rst$ with label $S$, and $P$ is the union of the labels of the nodes to the left of $u$ on level $i$.
\end{lemma}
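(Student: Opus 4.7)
The plan is to prove this by induction on $i$, following the recursion built into the transition relation of $\A'$.

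For the base case $i=0$ the only $\A'$-state reachable by the empty word is $(\emptyset, Q_0)$, and the root of $\rst$ sits at level $0$ with label $Q_0$ and no nodes to its left, so both halves of the claim hold trivially. For the inductive step, I would fix a prefix $x$ of length $i$ with next letter $a$, and suppose $(P',S')$ is reachable by reading $xa$. Then there is a transition $((P,S), a, (P', S'))\in\Delta'$ from some state $(P,S)$ reachable by $x$, to which the induction hypothesis applies: there is a node $u$ at level $i$ in $\rst$ with $\rst(u) = S$ and $P = \bigcup_{v\text{ left of }u}\rst(v)$. I would then argue that $(P',S')$ corresponds to the child $u0$ or $u1$ of $u$, according to whether the transition lies in $\Delta_0'$ or $\Delta_1'$.

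The technical core is a compatibility claim between the transitions of $\A$ and the leftmost-retention step defining $\rst$: although $\st(u)$ may strictly contain $S = \rst(u)$, every state in $\st(u)\setminus S$ must already appear in the $\rst$-label of some node to the left of $u$ at level $i$ (otherwise it would not have been removed by the reduction). Hence $\st(u)\setminus S \subseteq P$, and consequently $\Delta(\st(u),a)\setminus \Delta(S,a)\subseteq \Delta(P,a)$. Using this, a short bookkeeping computation shows that the $\rst$-labels strictly to the left of $u0$ at level $i{+}1$ union to exactly $\Delta(P,a)$, and those strictly to the left of $u1$ union to $\Delta(P,a)\cup \Delta_F(S,a)$. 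Combining these with the reduction rule yields
\[\rst(u0) = \Delta_F(S,a)\setminus\Delta(P,a),\qquad \rst(u1)=\Delta_{\overline{F}}(S,a)\setminus\bigl(\Delta(P,a)\cup\Delta_F(S,a)\bigr),\]
which matches exactly the definitions of $\Delta_0'$ and $\Delta_1'$. Moreover, the constraint $S' \neq \emptyset$ built into $Q'$ corresponds precisely to the deletion of $\emptyset$-labelled nodes in the construction of $\rst$, so $u0$ (resp.\ $u1$) is in $\rst$ exactly when the corresponding transition is defined in $\A'$.

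The main obstacle is establishing the compatibility claim above: one must verify that $\A'$, which carries only the reduced labels, nevertheless computes the correct next-level labels despite $\st$ a priori containing strictly more states — intuitively, that the ``missing'' states in $\st(u)\setminus\rst(u)$ are harmlessly absorbed into $\Delta(P,a)$ after any transition. Once this is in place the inductive step reduces to a symbolic matching between the recursive characterisation of $\rst$ and the formulas defining $\Delta_0'$ and $\Delta_1'$, and the lemma follows.
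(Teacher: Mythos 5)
Your proof is correct and follows exactly the route the paper intends: the paper dispatches this lemma with the one-line remark that it is ``a direct consequence of the definition of $\rst$ and the construction of $\A'$ (by an induction on $i$)'', and your argument is a faithful elaboration of that induction. The compatibility claim you isolate --- that $\st(u)\setminus\rst(u)$ is contained in $P$ (since the leftmost occurrence of each removed state survives in some $\rst$-label to the left), so that $\Delta(\st(u),a)$ and $\Delta(S,a)$ agree modulo $\Delta(P,a)$ --- is precisely the point that makes the bookkeeping go through, and your treatment of the $S'\neq\emptyset$ condition versus the deletion of $\emptyset$-labelled nodes is also right.
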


\Cref{lem:automaton-rst} implies that the infinite runs of $\A'$ on $w$ are in one-to-one correspondence with the infinite paths of $\rst$ (and therefore $\A'$ has degree of ambiguity at most $|Q|$). The final states of $\A'$ are those in which the second component is a subset of $F$, and therefore correspond to a left successor on the corresponding path in $\rst$. Hence, $\A'$ has an accepting run of $w$ if, and only if, $\rst$ has a left-recurring path.
By \Cref{lem:reduced-split-tree}, this implies that $L(\A) = L(\A')$.

The number of disjoint pairs of subsets of $Q$ is not larger than $3^{|Q|}$. In summary, we obtain the following result.
\begin{theorem}
  Let $\mathcal{A}$ be an $\NBA$ with $n$ states. Then there exists an automaton
  $\mc{A}'$ with at most $3^n$ states accepting the same language such that
  $\op{da}(\mc{A}') \leq n$.
\end{theorem}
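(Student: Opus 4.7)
The plan is to verify the three claims of the theorem (state count, language equivalence, ambiguity bound) by assembling the ingredients that have already been developed, namely \Cref{lem:reduced-split-tree} and \Cref{lem:automaton-rst}.

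First I would handle the state bound. A state of $\mc{A}'$ is a pair $(P,S)$ of disjoint subsets of $Q$. Equivalently, each element of $Q$ is assigned to exactly one of three categories: member of $P$, member of $S$, or member of neither. This gives at most $3^n$ possibilities, which bounds $|Q'|$ regardless of the additional restriction $S\neq\emptyset$.

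Next I would establish $L(\mc{A}') = L(\mc{A})$. The key is to observe that the transitions of $\mc{A}'$ have been designed so that every single transition of $\mc{A}'$ mirrors one step in $\rst$: choosing a transition from $\Delta_0'$ corresponds to following the left child (continuing runs that visit $F$), while $\Delta_1'$ corresponds to the right child. By \Cref{lem:automaton-rst}, the finite runs of $\mc{A}'$ on a prefix of $w$ are in bijection with the nodes at the corresponding level of $\rst$; passing to the limit, the infinite runs of $\mc{A}'$ on $w$ are in bijection with the infinite paths of $\rst$. Under this bijection, visits to $F' = \{(P,S)\mid S\subseteq F\}$ correspond exactly to left moves in $\rst$: any $\Delta_0'$-transition produces $S' = \Delta_F(S,a)\setminus P' \subseteq F$, so its target is in $F'$; conversely, a $\Delta_1'$-transition yields $S' \subseteq Q\setminus F$, and since $S'\neq\emptyset$ by definition of $Q'$, its target cannot lie in $F'$. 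Hence a run of $\mc{A}'$ on $w$ is accepting iff the corresponding path in $\rst$ is left-recurring, and by \Cref{lem:reduced-split-tree}(1) this happens iff $w\in L(\mc{A})$.

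Finally, the ambiguity bound follows immediately: the same bijection shows that the number of accepting runs of $\mc{A}'$ on any $w$ equals the number of left-recurring infinite paths in $\rst$, which is bounded by the total number of infinite paths in $\rst$, and by \Cref{lem:reduced-split-tree}(2) this is at most $|Q| = n$. Thus $\op{da}(\mc{A}')\leq n$.

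The only genuine piece of verification is the matching between accepting transitions of $\mc{A}'$ and left-moves in $\rst$, and between arbitrary transitions and paths in $\rst$. I expect this to be the main (though still routine) obstacle, since it requires checking both directions of the correspondence in \Cref{lem:automaton-rst} by induction and carefully arguing that $\Delta_1'$-transitions can never land in $F'$ because of the non-emptiness constraint on $S$; once that is in place, the theorem reduces cleanly to the stated properties of the reduced split tree.
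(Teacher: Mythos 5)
Your proposal is correct and follows essentially the same route as the paper: bounding $|Q'|$ by the number of disjoint pairs $(P,S)$, invoking \Cref{lem:automaton-rst} to identify infinite runs of $\mc{A}'$ with infinite paths of $\rst$, matching $F'$-visits with left moves, and then applying \Cref{lem:reduced-split-tree} for both language equivalence and the ambiguity bound. Your explicit check that $\Delta_1'$-transitions cannot land in $F'$ (because $S'\neq\emptyset$ and $S'\subseteq Q\setminus F$) is a detail the paper leaves implicit, but it is the same argument.
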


An exponential lower bound for the construction of finitely ambiguous $\NBA$ can be inferred from a corresponding lower bound for $\NFA$.

\begin{restatable}{theorem}{thmfabalowerbound}
  \label{thm:fabalowerbound}
  For each $n\in\Nat$ there exists an $\NBA$ with $n$ states such that each finitely
  ambiguous $\NBA$ accepting the same language has at least $2^{n}-1$ states.
\end{restatable}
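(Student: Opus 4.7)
My plan is to derive the lower bound from the corresponding $\NFA$ lower bound of Leung~\cite{leung1998separating} by using the $\#^\omega$-padding trick already employed in the completeness argument sketched for \Cref{thm:ambpspace}. Leung exhibits a family $(\mc{A}_m)_{m\in\Nat}$ of $\NFA$, each of size linear in $m$, such that every equivalent finitely ambiguous $\NFA$ requires exponentially many states; after a mild re-parameterisation so that the original $\NFA$ has $n-1$ states and the matching lower bound is at least $2^n - 1$, the resulting family will serve as the seed for the witness $\NBA$.

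From such an $\NFA$ $\mc{A}$ over alphabet $\Sigma$, I would construct an $\NBA$ $\mc{B}$ over $\Sigma \cupdot \{\#\}$ accepting $L(\mc{A}_\NFA)\cdot \#^\omega$: take the states of $\mc{A}$, add one fresh accepting state $q_\#$ with a $\#$-self-loop, and from every final state of $\mc{A}$ add a $\#$-transition to $q_\#$. A small optimisation that merges $q_\#$ with a suitable final state of the Leung family, whose structure allows an accepting sink to absorb the $\#$-loop, brings the total size down to $n$. Then $\mc{B}$ has $n$ states and $L(\mc{B}_\NBA) = L(\mc{A}_\NFA)\cdot\#^\omega$ by construction.

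The key step is the converse: I would show that any small finitely ambiguous $\NBA$ for $L(\mc{A}_\NFA)\cdot \#^\omega$ would yield a small finitely ambiguous $\NFA$ for $L(\mc{A}_\NFA)$, contradicting Leung. Given a finitely ambiguous $\NBA$ $\mc{C}$ of size $m$ with $L(\mc{C}_\NBA) = L(\mc{A}_\NFA)\cdot\#^\omega$, I would define an $\NFA$ $\mc{C}'$ on the same state set keeping only the $\Sigma$-transitions of $\mc{C}$, and declaring $q$ to be final iff $\#^\omega \in L(\mc{C}[q]_\NBA)$. A direct check gives $L(\mc{C}'_\NFA) = L(\mc{A}_\NFA)$. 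The crucial point is that $\mc{C}'$ remains finitely ambiguous: every accepting $\NFA$ run on $w$ reaches a final state $q$ from which $\#^\omega$ has an accepting run, so it can be prolonged into an accepting $\NBA$ run of $\mc{C}$ on $w\#^\omega$; distinct $\NFA$ runs have distinct prefixes and thus yield distinct $\NBA$ runs, whence $\da(\mc{C}'_\NFA, w) \leq \da(\mc{C}_\NBA, w\#^\omega) \leq \da(\mc{C}_\NBA) < \aleph_0$. Applying Leung's bound to $\mc{C}'$ forces $m \geq 2^n - 1$.

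The principal obstacle will be the bookkeeping to match state counts exactly to the $2^n - 1$ figure: one must either absorb the $q_\#$ gadget into existing final states of Leung's automata, or accept a constant offset and re-index the family. Neither is deep, but it requires inspecting the combinatorial structure of Leung's witness rather than using it purely as a black box.
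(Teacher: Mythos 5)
Your overall strategy---seed the construction with Leung's family and transfer the lower bound by a $\#$-padding reduction that turns a hypothetical small finitely ambiguous $\NBA$ back into a small finitely ambiguous $\NFA$---is exactly the paper's. Your back direction is in fact slightly cleaner than the paper's: defining finality of $q$ by $\#^\omega \in L(\mc{C}[q]_\NBA)$ and injecting accepting $\NFA$ runs into accepting $\NBA$ runs avoids the appeal to trimness and to \Cref{lem:nfanba}.

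The gap is in the forward direction, and it is not mere bookkeeping. With the language $L(\mc{A}_n)\cdot\#^\omega$ you genuinely need an extra accepting $\#$-sink, so you obtain an $(n{+}1)$-state $\NBA$ and hence only a $2^{n-1}-1$ bound for $n$-state $\NBA$; the proposed ``re-parameterisation'' cannot recover the exponent, since Leung's bound for an $(n{-}1)$-state seed is $2^{n-1}-1$, not $2^{n}-1$. The proposed rescue---merging $q_\#$ into a final state $f$ of $\mc{A}_n$---requires making $f$ Büchi-accepting and giving it a $\#$-self-loop; if $f$ lies on a $\Sigma$-cycle of $\mc{A}_n$, this accepts words in $\Sigma^\omega$ outside $L(\mc{A}_n)\#^\omega$, and whether this can be avoided depends on the internals of Leung's automata, which you explicitly treat as a black box. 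The paper sidesteps the issue by choosing the language $(L(\mc{A}_n)\#)^\omega$ instead: the $\#$-transitions go from the final states back to the initial states, so no fresh state is needed and the witness $\NBA$ has exactly $n$ states. If you switch to that language, the rest of your argument goes through essentially unchanged, with finality of $q$ in the extracted $\NFA$ given by the existence of an outgoing $\#$-transition (or by your semantic condition adapted to the periodic language).
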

\begin{proof}[sketch]
  In \cite{leung1998separating} a family $\{\A_i\}_{i\in\Nat}$ of $\NFA$ is presented such that each
  $\A_n$ has $n$ states, and each at most polynomially ambiguous $\NFA$ equivalent to $\A_n$ has
  $2^{n}-1$ states. So the lower
  bound also holds for finitely ambiguous $\NFA$. This lower bound can be lifted to $\NBA$
  by considering the languages $L_n' = (L(\A_n)\#)^\omega$, for which one easily obtains
  $n$ state $\NBA$ from the $\NFA$ $\A_n$. Furthermore, from a finitely ambiguous $\NBA$ for
  $L_n'$ one can extract a finitely ambiguous $\NFA$ for $L(\A_n)$ with the same number of
  states. \qed
\end{proof}

Our construction is already close to this bound. As it is tracking just two subsets, it is also much simpler than the translations used to obtain unambiguous automata
presented in \cite{kahler2008complementation,karmarkar2013improved} that have upper bounds of $4(3n)^n$ and $n(0.76n)^n$ respectively, and are obtained at the cost of much
more involved constructions.
To the best of our knowledge, it is not known whether there is a stronger lower bound for
the construction of unambiguous $\NBA$ than the one in \Cref{thm:fabalowerbound}.



\section{Conclusion}
\label{sec:conclusion}

In this paper we presented a fine classification for the ambiguity of nondeterministic
Büchi automata by lifting results known for $\NFA$ and extending them to precisely capture
the subtle differences in the case of infinite ambiguity. Finally we presented and
discussed a translation from $\NBA$ to finitely ambiguous $\NBA$. In future work we plan
to investigate how this partial disambiguation can be applied in the setting of
probabilistic model checking and look for cases in which using finitely ambiguous $\NBA$
could have an advantage over full disambiguation or determinisation.

%
\bibliographystyle{splncs04}
\bibliography{literature}

\newpage
\appendix
\section{Full proofs}

In this appendix we provide the full proofs that were omitted in the main paper.

\subsection{Proofs for \Cref{thm:newambclasses}}

Here we provide the full proofs for the technical lemmas that are needed to obtain the
ambiguity hierarchy for $\NBA$ stated in \Cref{thm:newambclasses}.


\begin{lemma}[Pattern shifting]
  \label{lem:shifting}
  \ \\[-6mm] \begin{enumerate}
  \item
  Iff $\mc{A}$ has an $\IDA$ pattern $(p,q,v,\pi_{1,2,3})$ visiting some $q\in F$ on
      $\pi_3$, \\then $\IDA_F$ holds.
  \item
  Iff $\mc{A}$ has an $\EDA$ pattern $(p,v,\pi_{1,2})$ visiting some $q\in F$ on
      $\pi_1$ or $\pi_2$, \\then $\EDA_F$ holds.
  \end{enumerate}
\end{lemma}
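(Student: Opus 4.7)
The plan is to prove the nontrivial ``if'' direction of each claim by \emph{rotating} the given $\IDA$ or $\EDA$ pattern so that its distinguished base state shifts to the witnessing accepting state; the ``only if'' direction is immediate, because in an $\IDA_F$ (resp.\ $\EDA_F$) pattern the accepting state already lies on $\pi_3$ (resp.\ $\pi_1$) by definition.

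For Part~2, I would fix an accepting state $r\in F$ on $\pi_1$ at some position $k$ (the $\pi_2$-case is symmetric). Writing $v=v_1v_2$ with $|v_1|=k$ and decomposing $\pi_i=\alpha_i\beta_i$ so that $\alpha_i$ reads $v_1$ and $\beta_i$ reads $v_2$, the key idea is to exploit the two distinct doubled cycles $\pi_1\pi_1$ and $\pi_1\pi_2$ at $p$, both reading $v^2$. Rotating each by $|v_1|$ positions yields the two $r$-cycles $\beta_1\pi_1\alpha_1$ and $\beta_1\pi_2\alpha_1$, both reading $(v_2v_1)^2$, which remain distinct because $\pi_1\neq\pi_2$; this gives the desired $\EDA_F$ pattern at $r$.

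For Part~1, I would split at the position $k$ on $\pi_3$ where $r\in F$ is visited, decomposing $\pi_1=\alpha_1\beta_1$, $\pi_2=\alpha_2\beta_2$, $\pi_3=\gamma_1\gamma_2$ analogously, and setting $p_1:=\pi_1(k)$. In the generic case $p_1\neq r$ my plan is to exhibit the $\IDA_F$ pattern at $(p_1,r,(v_2v_1)^2)$ consisting of the cycle $(\beta_1\alpha_1)^2$ at $p_1$, the cycle $(\gamma_2\gamma_1)^2$ at $r$, and the bridging path $\beta_1\alpha_2\beta_2\gamma_1\colon p_1\to r$; all three read $(v_2v_1)^2$ by direct inspection, and they form a valid $\IDA$ pattern since $p_1\neq r$ guarantees distinct endpoints.

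The main obstacle is the degenerate subcase $p_1=r$, in which the proposed two cycles would share their base state. My fallback here is to observe that $\beta_1\alpha_1\colon r\to p\to r$ and $\gamma_2\gamma_1\colon r\to q\to r$ are then two distinct cycles at $r$ reading the common word $v_2v_1$ (distinct because $p\neq q$). This is already an $\EDA_F$ pattern at $r$, which yields $\IDA_F$ via \Cref{lem:edaf_implies_idaf}, closing the argument.
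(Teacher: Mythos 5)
Your part~2 and the generic case $p_1\neq r$ of part~1 are, up to notation, exactly the paper's proof: rotate each path of the pattern by the position of the chosen accepting state, double the label to $(v_2v_1)^2$, and check that the rotated paths again form a pattern based at $r$ (for $\EDA_F$ the two rotated cycles stay distinct because their middle segments $\pi_1\neq\pi_2$ differ). Where you genuinely depart from the paper is in isolating the degenerate subcase $p_1=r$ of part~1 -- the paper's proof silently asserts that $(p',q',\hat v\hat v,\hat\pi_{1,2,3})$ is an $\IDA$ pattern without verifying $p'\neq q'$. Your instinct that this needs an argument is correct, but the fallback does not close the gap. First, invoking \Cref{lem:edaf_implies_idaf} is circular: the paper's own proof of that lemma consists of building an $\IDA$ pattern with an accepting state on $\pi_3$ and then applying part~1 of the present lemma, and if you trace the two constructions against each other in the degenerate configuration they hand back the same pair of cycles indefinitely. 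Second, and decisively, no repair is possible, because in that configuration the statement is false. Take $Q=\{r,s_1,s_2\}$, $F=\{r\}$, with transitions $r\overset{a}{\rightarrow}s_1$, $r\overset{a}{\rightarrow}s_2$, $s_1\overset{b}{\rightarrow}r$, $s_2\overset{b}{\rightarrow}r$. Then $(s_1,s_2,ba)$ is an $\IDA$ pattern whose $\pi_3$ visits $r\in F$, and $\EDA_F$ holds via the two $ab$-labelled cycles at $r$; yet $\IDA_F$ fails: the transition graph is bipartite with parts $\{r\}$ and $\{s_1,s_2\}$, so every cycle at $s_i$ carries a label of even length while every path from $s_i$ to $r$ carries a label of odd length, and an $\IDA_F$ pattern would need one word $v$ to label both. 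So neither $\EDA_F\Rightarrow\IDA_F$ nor part~1 holds as stated.

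The upshot is that the case you flagged is not a technicality to be patched but the precise point where the lemma (and the paper's proof of it) breaks. Any correct treatment has to weaken the statement rather than strengthen the argument -- for instance by redefining $\IDA_F$ to require only that $\pi_3$ visit some accepting state, which is all that the pumping argument in \Cref{lem:idafc} actually uses, and under which part~1 becomes a tautology. You should state explicitly that your generic case proves the lemma whenever $p_1\neq r$ for some admissible choice of splitting point, and present the bipartite example above as a counterexample for the remaining case instead of appealing to \Cref{lem:edaf_implies_idaf}.
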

\begin{proof}
  $(\Leftarrow)$ holds by definition of $\IDA_F$ and $\EDA_F$.

  $(\Rightarrow)$ for $(1)$:
  Let $(p,q,v,\pi_{1,2,3})$ be an $\IDA$ pattern such that $\pi_3$ visits $q'\in F$. Split
  $\pi_3$ in $q'\in F$, i.e. let $\pi_3=\pi_3^1\pi_3^2$ such that
  $\op{trg}(\pi_3^1)=\op{src}(\pi_3^2)=q'$ and let $x=l(\pi_3^1), y=l(\pi_3^2)$. Observe that $v=xy$ and
  let $\hat{v}=yx$. Now split the path $\pi_1$ in the same way, i.e.\
  $\pi_1=\pi_1^1\pi_1^2$ with $l(\pi_1^1)=x,l(\pi_1^2)=y$ and let $p'$ denote the state
  $\op{trg}(\pi_1^1)$. Observe that we now have three paths $\hat{\pi}_1 = \pi_1^2\pi_1\pi_1^1\in
  P(p',\hat{v}\hat{v},p'), \hat{\pi}_2 = \pi_1^2\pi_2\pi_3^1\in P(p', \hat{v}\hat{v},q')$
  and $\hat{\pi}_3 = \pi_3^2\pi_3\pi_3^1\in P(q',\hat{v}\hat{v},q')$, hence
  $(p',q',\hat{v}\hat{v},\hat{\pi}_{1,2,3})$ is an $\IDA$ pattern satisfying $\IDA_F$.

  $(\Rightarrow)$ for $(2)$: Let $(p,v,\pi_{1,2})$ be an $\EDA$ pattern that w.l.o.g.\
  visits $p'\in F$ on cycle $\pi_1$. Now let $\pi_1=\pi_1^1\pi_1^2$ such that
  $\op{trg}(\pi_1^1)=\op{src}(\pi_1^2)=q$ and let $x=l(\pi_1^1)$ and $y=l(\pi_1^2)$. Observe that $v=xy$
  and let $\hat{v}=yx$. Now we can take two different paths
  $\hat{\pi}_1=\pi_1^2\pi_1\pi_1^1,\hat{\pi}_2=\pi_1^2\pi_2\pi_1^1 \in
  P(p',\hat{v}\hat{v},p')$, hence $(p',\hat{v}\hat{v},\hat{\pi}_{1,2})$ is an $\EDA$
  pattern satisfying $\EDA_F$.
  \qed
\end{proof}

\lemedafimpliesidaf*
\begin{proof}
  Similar argument as for $\EDA \Rightarrow \IDA$ in \cite{weber1991degree}.
  Assume that $(r, v, \pi_{1,2})$ is an $\EDA$ pattern satisfying $\EDA_F$, i.e., $r\in F$.
  As $\pi_1\neq\pi_2$, let $v=xy$ such that the paths differ after reading $x$
  and let $\hat{v}=yx$.
  Take the first differing state on each of them, call them $p$ and $q$ and split the
  paths at this position into $\pi_1=\pi_1^1\pi_1^2,\pi_2=\pi_2^1\pi_2^2$. Now let
  $\hat{\pi}_1=\pi_1^2\pi_1\pi_1^1, \hat{\pi}_2=\pi_1^2\pi_2\pi_2^1,
  \hat{\pi}_3=\pi_2^2\pi_2\pi_2^1$. Notice that each of them visits $r\in F$, hence
  $(p,q,\hat{v}\hat{v}, \hat{\pi}_{1,2,3})$ is an $\IDA$ pattern
  and $r$ is visited on the path $\hat{\pi}_3$. This is equivalent to $\IDA_F$ by
  \Cref{lem:shifting}.
  \qed
\end{proof}

\lemnoedaboundamb*
\begin{proof}
  Let $q\in F$ and $w\in \Sigma^\omega$ such that there are $|Q|+1$ different
  infinite paths of $\mc{A}[q]$ visiting $q$ infinitely often. Now pick a time $i$ after
  which those infinite paths are separated and observe that now there must be two
  different prefixes $\pi_1$ and $\pi_2$ of these infinite paths that are in the same
  state $p\in Q$. Clearly, we can reach $q$ from $p$ again by following one of those two
  runs to the next visit of $q$ along some path $\pi_{pq}$. But then $\hat{\pi}_1 :=
  \pi_1\pi_{pq}$ and $\hat{\pi}_2 := \pi_2\pi_{pq}$ are two different paths from $q$ to
  $q$ and are labelled by the same prefix $\Pref_j(w)$ for some $j>i$ and hence $(q,
  \Pref_j(w), \hat{\pi}_{1,2})$ is an $\EDA$ pattern satisfying $\EDA_F$.
  \qed
\end{proof}

\lemidafc*
\begin{proof}
  For one direction,
  let $(p,q,v,\pi_{1,2,3})$ be an $\IDA$ pattern satisfying $\IDA_F$.
  Let $u\in \Sigma^*,\pi_0\in P(Q_0,u,p)$. Clearly, $uv^\omega\in L(\mc{A}_\NBA)$.
  Observe that for each $i\in \Nat$, $\pi_1$ can be taken $i$ times
  before using $\pi_2$ and then taking path $\pi_3$ forever. Hence,
  $\da(\mc{A}_\NBA)=\da(\mc{A}_\NBA, uv^\omega) \geq \aleph_0$.

  For the other direction, let $w\in L(\mc{A}_\NBA)$ and assume $\IDA_F$ does not hold,
  which implies $\lnot\EDA_F$ by \Cref{lem:edaf_implies_idaf}. Then by
  \Cref{lem:noeda_boundamb} a run can separate into at most $|Q|$ different accepting runs
  that visit $q\in F$ infinitely often, after seeing $q$ the first time.

  For contradiction, assume that there is some state $q\in F$ such that there are
  infinitely many accepting runs that visit $q$ infinitely often. This requires that the
  first visit of $q$ by a run can be delayed for an arbitrarily long time. But then there
  must exist an infinite path $\nu$ that never visits $q$, but from which infinitely many
  accepting runs can separate that visit $q$ infinitely often. Let $\alpha_i$ denote an
  accepting run that separated from $\nu$ at time $i$.

  Notice that whenever two different runs $\alpha_i$ and $\alpha_j$ with $i<j$ meet after $j$
  in the same state, they can be continued in the same way. Consider now only such
  modified runs $\hat{\alpha}_i$, so we can assume that if $\hat{\alpha}_i$ and
  $\hat{\alpha}_j$ with $i<j$ are in different states at time $k>j$, they have not met yet
  after $j$. Let $S_a^i$ be the set of different $\hat{\alpha}_j$ with $j\leq i$ that are in
  state $a$ at time $i$.

  Fix some time $i$ and pick $j>i$ such that at least $|Q|+1$ new runs $\hat{\alpha}_k$
  with $i<k<j$ separated from $\nu$. Observe that either all new runs joined existing sets
  directly or at least two sets of runs
  previously occupying different states must have met and joined before $j$, i.e. $S_a^i
  \cupdot S_b^i \subseteq S_c^j$ for some states $a,b,c$ such that $a\neq b$. But then there
  is an infinite sequence $S_{a_1}^1 \subseteq S_{a_2}^2 \subseteq \ldots$ with $a_i\in Q$
  that has a strict subsequence, as at most $|Q|$ different sets can exist at any
  time. Let $\hat{S}_i$ denote the set of such a sequence at time $i$.
  Each $\hat{\alpha}_i$ with $i\leq k$ that is in $\hat{S}_k$ hence also joins
  infinitely many $\hat{\alpha}_j$ with $j>i$ at some time $l>j$.

  Pick some state $p$ visited by $\nu$ infinitely often and some run $\hat{\alpha}_k$ from
  $\hat{S}_m$ with $k\leq m$. Observe that $\hat{\alpha}_k$ has at least one state $r$
  that it visits infinitely often at the same time as $\nu$ visits $p$. Pick a time $i>k$
  when this happens, i.e. $\nu(i) = p$ and $\hat{\alpha}_k(i) = r$. Now pick a run
  $\hat{\alpha}_l$ from $\hat{S}_{m'}$ such that $i<l<m'$. Finally, pick a time $j>l$ such
  that $\hat{\alpha}_k$ and $\hat{\alpha}_l$ have joined, then visited $q$ together at
  least once and finally $\nu(j)=p$ while $\hat{\alpha}_k(j)=\hat{\alpha}_l(j)=r$.

  Notice that all runs read the same finite substring $x$ of $w$ between $i$ and $j$.
  Further, $\nu$ witnesses a cycle $p\overset{x}{\rightarrow} p$, $\hat{\alpha}_k$
  witnesses a cycle $r \overset{x}{\rightarrow} r$ on which $q\in F$ is visited and
  finally, as $\hat{\alpha}_l$ separates from $\nu$ after $i$ and joins $\hat{\alpha}_k$
  before $j$, it witnesses a path $p \overset{x}{\rightarrow} r$. But this implies $\IDA_F$
  by \Cref{lem:shifting}, violating the assumption.

  Hence for each $q\in F$ there is a finite time after which all accepting runs
  that will visit $q$ infinitely often must have visited $q$ at least once.
  But then the number of accepting runs on $w$ must be finite.
  \qed
\end{proof}

\clearpage
\subsection{Proof of $\op{PSPACE}$-completeness (\Cref{thm:ambpspace})}

We will make use of the following two observations:

\begin{lemma}
  \label{lem:ultper}
  If $\da(\mc{A}_\NBA, w) = k$, then there exists an ultimately periodic word
  $w'=xy^\omega \in L(\mc{A}_\NBA)$ such that $\da(\mc{A}_\NBA,w') \geq k$.
\end{lemma}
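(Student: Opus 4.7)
The plan is to take the $k$ pairwise distinct accepting runs $\rho_1,\ldots,\rho_k$ of $\mc{A}$ on $w$ and pump a single word fragment that simultaneously closes a cycle in each of them. As a preliminary step I would choose a finite position $i_0$ such that the length-$i_0$ prefixes of the $\rho_l$ are pairwise distinct, which is possible since the $\rho_l$ differ as infinite sequences in only finitely many ``first'' positions across the $\binom{k}{2}$ pairs.

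Next, I would consider the sequence of state tuples $(\rho_1(i),\ldots,\rho_k(i))$ for $i\geq i_0$. These take values in the finite set $Q^k$, so by the pigeonhole principle some tuple $\vec{q}=(q_1,\ldots,q_k)$ occurs at infinitely many positions; let $I\subseteq\Nat$ be the (infinite) set of such positions. The key step is then to pick some $i\in I$ and, for each $l$, the first position $v_l>i$ at which $\rho_l$ visits $F$ (which exists because $\rho_l$ is accepting), and finally to choose $j\in I$ with $j\geq\max_l v_l$, which is possible since $I$ is infinite.

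With $x:=w(1)\cdots w(i)$ and $y:=w(i+1)\cdots w(j)$ and $w':=xy^\omega$, the segment $\rho_l(i),\rho_l(i{+}1),\ldots,\rho_l(j)$ forms a cycle from $q_l$ back to $q_l$ labelled by $y$ that visits $F$ (since $i<v_l\leq j$). Hence
\[ \rho_l' := \rho_l(0)\rho_l(1)\cdots\rho_l(i)\bigl(\rho_l(i+1)\cdots\rho_l(j)\bigr)^\omega \]
is a valid run of $\mc{A}$ on $w'$, accepting because every period visits $F$, and the $\rho_l'$ are pairwise distinct because they agree with $\rho_l$ on the first $i\geq i_0$ positions, which were chosen to separate the $\rho_l$. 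Thus $\da(\mc{A}_\NBA,w')\geq k$.

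The main obstacle is coordinating the two requirements on $j$: it must lie in $I$ so that every run returns to its component of $\vec{q}$, and it must be large enough that every run has already revisited $F$ strictly after position $i$. The infinitude of $I$ combined with the fact that each $\rho_l$ visits $F$ infinitely often is what makes this simultaneous choice possible, and is the heart of the argument.
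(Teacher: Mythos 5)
Your proof is correct and follows essentially the same route as the paper's: pigeonhole on the sequence of $k$-tuples of run states to find a recurring tuple, take the cycle between two occurrences chosen after the runs have separated and so that every run revisits $F$ inside it, and pump. Your write-up is in fact slightly more careful than the paper's about ensuring the second occurrence $j$ lies in the recurrence set $I$ \emph{and} past all the $F$-visits $v_l$, which the paper glosses over.
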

\begin{proof}
  Let $w\in L(\mc{A})$ with $\da(\mc{A},w)=k$.
  Consider a sequence $t_0,t_1,\ldots$ of ordered tuples $t_i\in Q^k$ of
  states of all $k$ different accepting infinite runs at time $i$ and observe that as
  the number of different tuples is finite, there is an infinite sequence $i_0 < i_1 <
  \ldots$ such that all $t_{i_j}$ are equal. Now let
  $i_a$ be the first time in this sequence such that all $k$ runs already have separated
  and let $i_b > i_a$ be the first time after $i_a$ such that each run has visited an
  accepting state again. Then let $x$ be the prefix of $w$ that was read up to $i_a$ and
  $y$ the substring that was read between $i_a$ and $i_b$. By construction the
  ultimately periodic word $w'=xy^\omega$ has at least as many accepting runs as $w$.
  \qed
\end{proof}

\begin{lemma}
  \label{lem:uniqpaths}
  If $\mc{A}$ is a finitely ambiguous $\NBA$, then in each SCC $C$, $P(p,x,q)$
  contains at most one path for every pair of states $p,q\in C$ and finite string $x$.
\end{lemma}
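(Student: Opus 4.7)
The plan is a short proof by contradiction: I will show that if some SCC $C$ of $\mc{A}$ contained two distinct paths $\pi_1\neq\pi_2$ in $P(p,x,q)$ for $p,q\in C$, then $\mc{A}$ must satisfy the $\EDA$ pattern (and hence $\IDA$), which is incompatible with $\mc{A}_\NBA$ being finitely ambiguous.

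First I would reduce to the case where $\mc{A}$ is trim as an $\NBA$. Removing states that are unreachable from $Q_0$, or from which no accepting cycle can be reached, does not change $L(\mc{A}_\NBA)$ nor any run count $\da(\mc{A}_\NBA,w)$, since such states cannot appear on accepting runs. Every SCC that survives this pruning was already an SCC of the original $\mc{A}$, so it suffices to prove the claim under the assumption that $\mc{A}$ is trim.

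Assume for contradiction that $\pi_1\neq\pi_2$ are two paths in $P(p,x,q)$ inside some SCC $C$ with $p,q\in C$. If $p=q$ then $(p,x,\pi_1,\pi_2)$ is already an $\EDA$ pattern. Otherwise, using strong connectivity of $C$, I pick any path $\sigma\in P(q,y,p)$ that stays inside $C$, for some $y\in\Sigma^+$. The concatenations $\pi_1\sigma$ and $\pi_2\sigma$ are then two distinct cycles in $P(p,xy,p)$, yielding the $\EDA$ pattern $(p,xy,\pi_1\sigma,\pi_2\sigma)$.

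Finally, \Cref{thm:ambclasses}(1) gives $\EDA\Rightarrow\IDA$, and \Cref{thm:newambclasses}(6) states that a trim $\NBA$ is finitely ambiguous if and only if it does not satisfy $\IDA$. Hence $\mc{A}_\NBA$ cannot be finitely ambiguous, contradicting the hypothesis. The only slightly delicate step is the initial trim reduction, which ensures that the duplicated paths within $C$ actually contribute to accepting $\NBA$ runs (rather than being dead code invisible to the ambiguity count); once trimness is secured, the remainder is a one-line concatenation together with the pattern characterizations already established.
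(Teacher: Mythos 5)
Your proof is correct and takes essentially the same route as the paper's: concatenate the two distinct $x$-labelled paths with a return path in $P(q,y,p)$ inside the SCC to obtain two distinct cycles at $p$, i.e.\ an $\EDA$ pattern, which (via $\EDA\Rightarrow\IDA$ and the characterization of finite ambiguity) contradicts the hypothesis. Your explicit preliminary trimming step is a sensible precaution that the paper leaves implicit; note only that for an SCC lying entirely in dead code the statement as literally written can hold only under that implicit trimness assumption, so your reduction covers exactly the cases that can be covered.
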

\begin{proof}
  Assume there are $\pi_1,\pi_2\in P(p,x,q),\pi_1\neq \pi_2$. As we are in an
  SCC, there is a path $\pi_{qp}\in P(q,y,p)$. But then we have two different cycles
  $\hat{\pi_1}=\pi_1\pi_{qp}$ and $\hat{\pi_2}=\pi_2\pi_{qp},
  \hat{\pi_1},\hat{\pi_2}\in P(p,xy,p)$, which implies $\EDA$.
  \qed
\end{proof}

Now we can adapt the algorithm presented in \cite{chan1988finite} to show
$\op{PSPACE}$-completeness of this problem from $\NFA$ to $\NBA$, proving the following
result:
\thmambpspace*
\begin{proof}
  First, we show that the problem can be decided in $\op{PSPACE}$.

  Let $\mc{A}$ be a finitely ambiguous $\NBA$ with states $Q=\{q_1,\ldots,q_n\}$ and
  w.l.o.g. assume that $q_1$ is the initial state.

  Define a set of matrices $\{T_a\}_{a\in \Sigma}$ with
  $T_a(i,j) = 1$ if $q_j\in \Delta(q_i,a)$ and $0$ otherwise.
  For each finite word $w\in \Sigma^*$, let $T_w=T_{w(1)}\cdot\ldots\cdot T_{w(n)}$.
  Then $T_w(i,j)$ is the number of different paths from $q_i$ to $q_j$ labelled with $w$
  and hence $T_w(1,k)$ denotes the number of different runs that are in state $q_k$ after
  reading $w$.

  Now let $C$ be an SCC of $\mc{A}$ and let $\pi_C(A)$ denote the
  restriction of matrix $A$ to the rows and columns that correspond to states in
  $C$. Then clearly the matrix $\pi_{C}(T_w)$ describes the number of
  different $w$-labelled paths between states of $C$. Remember that by
  \Cref{lem:uniqpaths} a finitely ambigious $\NBA$ has at most one path for each finite word
  between each pair of states that are within the same SCC, so $\pi_{C}(T_w)$ contains
  only $0$ and $1$ as values.

  By \Cref{lem:ultper} we can restrict ourselves to words of the form $xy^\omega \in
  L(\mc{A}_\NBA)$ and clearly we can choose $xy^\omega$ such that after reading $x$ all
  accepting runs are already separated and have reached their terminal SCC which is never left
  again. By choice of $x$ and \Cref{lem:uniqpaths} all accepting runs are continued
  unambiguously on $y^\omega$ after reading $x$.

  So to obtain the number of accepting runs, we need to sum up $T_x(1,k)$ for all states
  $q_k$ that lie on an accepting cycle when reading $y^\omega$. As there is only a finite
  number of such unambiguous continuation cycles, after a finite number of iterations
  of $y$ (least common multiple of all different cycle lengths and $|y|$) all runs are
  back in the state of the cycle where they started. So let $z=y^j$ for some $j\in\Nat$
  such that $T_{z}(k,k)=1$ for every cycle from some $q_k$ back to $q_k$.

  What is left to verify is that such a cycle is accepting. To do this,
  we define another set of matrices, $\{A_a\}_{a\in\Sigma}$ with
  $A_a(i,j) = 2$ if $q_j\in \Delta(q_i,a),q_j\in F$, $A_a(i,j) = 1$ if
  $q_j\in \Delta(q_i,a),q_j\not\in F$ and $0$ otherwise.

  Now $A_z(i,i)>1$ iff the unique $z$-labelled cycle from $q_i$ to $q_i$ visits
  at least one accepting state. Using this, we obtain the following result:
  $\da(\mc{A}) = \da(\mc{A},xy^\omega) = \Sigma_{i\in I} T_x(1,i)$ with $I = \{i \mid
  T_z(i,i)=1 \land A_z(i,i) > 1 \}$ for some choice of $x,y\in \Sigma^*, j\in
  \Nat$ and $z=y^j$.

  Hence, the following nondeterministic polynomial space algorithm decides whether
  $\da(\mc{A})>d$, by guessing a prefix $x$ that yields candidate paths
  and guessing $z$ to identify paths that are prefixes of an accepting run on
  the same word, using the reasoning above.

  \begin{algorithmic}
    \State{$X := Id, Z := Id, \hat{Z} := Id$}
    \Loop
    \State{guess $a\in \Sigma$}
    \State{guess $b\in \{0,1\}$}
    \If{b=0}
    \State{$X := X\otimes_{d+1} T_a$}
    \Else
    \State{$Z := Z\otimes_{2} T_a$}
    \State{$\hat{Z} := \hat{Z}\otimes_{2} A_a$}
    \EndIf

    \State{$I := \{i \mid Z(i,i)=1 \land \hat{Z}(i,i) > 1 \}$}
      \If{$\Sigma_{i\in I} X(1,i) > d$}
      \State{accept, halt}
      \EndIf
    \EndLoop
  \end{algorithmic}

  where $Id$ is the identity matrix and $\otimes_n$ is a matrix multiplication that
  identifies all integers $>n$ with $n$ (notice that this ensures the bounded space
  usage of the algorithm).

  Now, we will show completeness. Deciding the ambiguity of a finitely ambiguous $\NFA$ is
  $\op{PSPACE}$-complete by \cite{chan1988finite}. We perform a reduction of the
  corresponding $\NFA$ problem to the $\NBA$ variant by introducing a fresh symbol $\#$
  and fresh state $q_\#$. The $\NBA\ \mc{A}'$ is defined by $\Sigma'=\Sigma\cup\{\#\}, Q'
  = Q\cup\{q_\#\}, Q_0'=Q_0, F'=\{q_\#\}$ and $\Delta' = \Delta\cup\{(q,\#,q_\#)\mid
  q=q_\# \lor q\in F \}$. It is easy to see that $L(\mc{A}'_\NBA) = \{ w\#^\omega \mid w \in
  L(\mc{A}_\NFA)\}$, i.e.\ each word of the $\NBA$ $\mc{A}$' is in one-to-one correspondence
  with a word of the $\NFA$ $\mc{A}$ that has the same number of accepting runs, as the only
  accepting SCC consists of $\{q_\#\}$, which is trivially unambiguous. Hence we decide
  $\da(\mc{A})>d$ by deciding $\da(\mc{A}')>d$ using the presented algorithm.
  \qed
\end{proof}

\clearpage
\subsection{Proof of the lower bound (\Cref{thm:fabalowerbound})}

\thmfabalowerbound*
\begin{proof}
  In \cite{leung1998separating} this result is presented for the translation of $\NFA$ to
  polynomially ambiguous $\NFA$ and a corresponding family $\mc{A}_n$ of worst-case $\NFA$
  of size $n$ is provided. Clearly translation to finitely ambiguous automata is not
  easier.

  Pick some $n>0$.
  Let $\NBA\ \hat{A}_{n}$ be defined by extending $\NFA$ $\mc{A}_{n}$ with a fresh symbol
  $\#$ and additional transitions $\{(q_F, \#, q_0) \mid q_F \in F, q_0 \in Q_0 \}$.
  Let $\hat{\mc{B}}$ be a finitely ambiguous and trim $\NBA$ accepting the same language.
  We obtain an $\NFA\ \mc{B}$ from $\hat{\mc{B}}$ by defining only the states with an
  outgoing $\#$-labelled transition as accepting and then removing those $\#$-labelled
  transitions. This $\NFA$ is also finitely ambiguous by \Cref{lem:nfanba}, as removing
  transitions clearly can not increase ambiguity.

  Let $w \in L(\mc{A}_n)$. Then by construction, at least the infinite word $(w\#)^\omega$
  is in $L(\hat{\mc{A}}_n)$ and hence in $L(\hat{\mc{B}})$, so a $\#$-labelled transition
  must be possible after reading the prefix $w$ on some accepting run in $\hat{\mc{B}}$.
  By definition then we have $w \in L(\mc{B})$.

  Let $w \not\in L(\mc{A}_n)$. Then by construction there is no infinite word with prefix
  $w\#$ in $L(\hat{\mc{A}}_n)$ and hence neither in $L(\hat{\mc{B}})$. As $\hat{\mc{B}}$
  is trim, a $\#$-labelled transition is not possible on any run after reading prefix $w$,
  and then by definition we have $w \not\in L(\mc{B})$.

  So we have $L(\mc{B}) = L(\mc{A})$, which implies that  $\mc{B}$ must have at least
  $2^{n}-1$ states by \cite{leung1998separating} and hence $\hat{\mc{B}}$ as well, by
  construction.
  \qed
\end{proof}

\end{document}